\documentclass[notitlepage,a4,10.5pt]{article}

\usepackage[backend=biber,style=alphabetic,doi=true,url=false]{biblatex}
\usepackage[colorlinks=true,linkcolor=blue,citecolor=blue,urlcolor=blue]{hyperref}

\usepackage{tikz}
\usetikzlibrary{arrows.meta,positioning,calc,shapes,fit}
\usepackage{subcaption}

\usepackage{tikz-cd}

\usepackage{array}
\usepackage{booktabs}  
\usepackage{multirow}  
\usepackage[margin=1in]{geometry}

\usepackage{graphicx}

\usepackage{amsmath}%
\usepackage{amsfonts}%
\usepackage{amssymb}%
\usepackage{mathrsfs}
\usepackage{amsthm}
\usepackage{float}

\usepackage{authblk}
\usepackage{enumitem}

\usepackage{xcolor}
\usepackage{appendix}

\newcommand{\mc}{\mathcal}
\newcommand{\cp}{\times}

\newcommand{\bol}{\boldsymbol}

\newcommand{\abs}[1]{\left\lvert{#1}\right\rvert}
\newcommand{\w}{\wedge}
\newcommand{\lr}[1]{\left({#1}\right)}
\newcommand{\lrs}[1]{\left[{#1}\right]}
\newcommand{\lrc}[1]{\left\{{#1}\right\}}

\newcommand{\mf}{\mathfrak}
\newcommand{\p}{\partial}

\newcommand{\ti}[1]{\textit{#1}}

\newtheorem{remark}{\textit{Remark}}[section]

\newtheorem{proposition}{\textit{Proposition}}[section]
\newtheorem{question}{\textit{Question}}

\newcommand{\eq}[1]{\begin{equation}\begin{split}{#1}\end{split}\end{equation}}
\newcommand{\sys}[2]{\begin{subequations}\begin{align}{#1}\end{align}\label{#2}\end{subequations}}

\begin{document}

\title{
Quantum--Fluid Correspondence for Systems of Nonrelativistic Spin-$\frac{1}{2}$ Particles 
}
\author{Naoki Sato 
\thanks{Corresponding author} {}\thanks{National Institute for Fusion Science,  322-6 Oroshi-cho Toki-city, Gifu 509-5292, Japan,
Email: \href{sato.naoki@nifs.ac.jp}{sato.naoki@nifs.ac.jp}
}  
{}\thanks{Graduate School of Frontier Sciences, The University of Tokyo,  
Kashiwa, Chiba 277-8561, Japan}
\qquad 
Michio Yamada \thanks{Research Institute for Mathematical Sciences, Kyoto University, Kyoto 606-8502, Japan,\ Email: \href{yamada@kurims.kyoto-u.ac.jp}{yamada@kurims.kyoto-u.ac.jp}} 
}
\date{\today}
\setcounter{Maxaffil}{0}
\renewcommand\Affilfont{\itshape\small}

    \maketitle
    \begin{abstract}
We show that a charged fluid endowed with an internal spin degree of freedom naturally satisfies the Pauli equation for a nonrelativistic spin-\(\tfrac12\) particle, and that a collection of \(n\) such interacting fluids can be reformulated as an Euler flow in \(3n\) dimensions, thereby providing a natural representation of a system of \(n\) Pauli particles. These results provide a fluid-mechanical derivation of the Pauli equation and extend the Madelung, or quantum-hydrodynamic, picture to many-particle quantum systems. In particular, they imply that an \(n\)-qubit quantum computer can, at least in principle, be realized as a suitable combination of \(n\) fluids, or equivalently as a \(3n\)-dimensional Euler flow.
\end{abstract}

\tableofcontents

\section{Introduction}

The Madelung transform provides a hydrodynamic reformulation of quantum mechanics by expressing a wave function in terms of fluid variables such as density and velocity. 
In its original form, due to Madelung \cite{Madelung1926,Madelung1927}, the Schr\"odinger equation becomes mathematically equivalent to an irrotational compressible Euler flow subject to a quantum force. 
Analogous hydrodynamic reformulations also exist for the Pauli and Dirac equations, provided one introduces, respectively, spin degrees of freedom and both spin and phase degrees of freedom \cite{Tak54,Tak55,Tak56}. 
For modern accounts of Bohmian and quantum-hydrodynamic formulations, see also \cite{Holland93,Wyatt05}.

The aim of the present paper is to investigate, in the context of the Pauli equation, the converse point of view. 
More precisely, we ask whether a charged fluid endowed with an internal spin (intrinsic angular momentum) degree of freedom---which we call a \emph{fluid with spin}---can be shown to obey the hydrodynamic (Madelung) form of the Pauli equation, and we examine the theoretical and computational consequences of such an inverse quantum--fluid correspondence. 
Specifically, we address the following two questions:

\begin{question}\label{Q1}
Given a fluid with spin, can one derive the hydrodynamic (Madelung) form of the Pauli equation from fluid-mechanical principles?
\end{question}

\begin{question}\label{Q2}
Can \(N\) interacting fluids with spin be used to describe the dynamics of a system of \(N\) Pauli particles?
\end{question}

\noindent We will show in sections \ref{sec:PauliDer} and \ref{sec:NPauli} that the answer to both Questions \ref{Q1} and \ref{Q2} is affirmative. 
Before reviewing the background and motivation that led us to consider these problems, we summarize what we regard as the main implications of this result:

\begin{figure}
  \centerline{\includegraphics[scale=0.26]{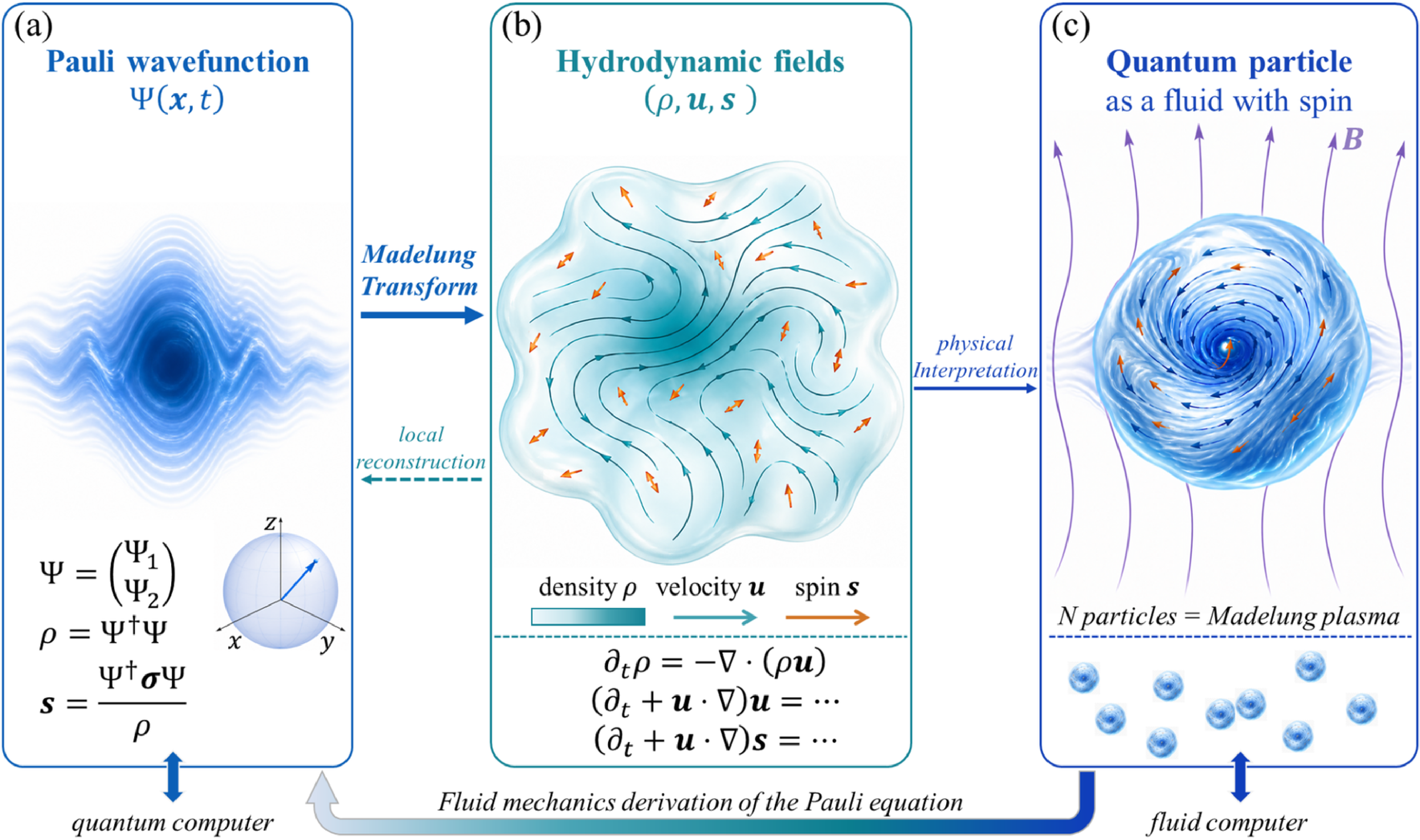}}
  \caption{
  Schematic illustration of the quantum--fluid correspondence developed in the present work. 
  Panel (a) shows the usual Pauli wavefunction description and its Madelung transform into the hydrodynamic variables \(\lr{\rho,\bol{u},\bol{s}}\), consisting of the density, velocity, and spin fields. 
  Panel (b) represents the corresponding hydrodynamic system, whose governing equations are those of the Madelung form of the Pauli equation. 
  Panel (c) summarizes the converse viewpoint explored in this paper: a quantum particle may be modeled as a fluid with spin, and a system of \(N\) Pauli particles may be represented by \(N\) interacting fluids with spin, or equivalently by a higher-dimensional Madelung plasma. 
  The lower arrows indicate the computational perspective suggested by this correspondence, namely the possibility, at least in principle, of realizing quantum dynamics through suitably engineered fluid systems, and conversely of simulating fluid-with-spin systems on quantum computers.
  }
\label{fig1}
\end{figure}

\begin{enumerate}
\item A quantum-mechanical evolution can be derived, in hydrodynamic form, from fluid-mechanical principles if one models a quantum particle as a fluid with spin. 
The resulting Madelung system is mathematically equivalent to the corresponding quantum-mechanical equation and is therefore fully consistent with the standard formulation of quantum mechanics.

\item In principle, an \(N\)-fluid system with spin can be used to represent a system of \(N\) Pauli particles, and hence an \(N\)-qubit quantum computer. 
Such a program would require the identification or implementation of a fluid bit, together with its extension to a fluid quantum bit by incorporating the quantum forces and stresses characteristic of spin dynamics in a Madelung fluid. 
Conversely, at least in principle, an \(N\)-fluid system with spin could itself be simulated on a quantum computer.
\end{enumerate}

\noindent Figure \ref{fig1} summarizes the main ideas, scope, and motivations of the quantum--fluid correspondence examined in the present work. 
In particular, it highlights the two complementary directions developed below: first, the passage from the Pauli wavefunction description to its hydrodynamic formulation through the Madelung transform; and second, the converse derivation showing that a fluid endowed with an internal spin degree of freedom obeys the same hydrodynamic system. 
It also illustrates the many-particle extension discussed later in the paper, together with the computational perspective motivating the notion of a fluid realization of quantum dynamics.

We now review the background and scientific literature surrounding the quantum--fluid correspondence suggested by the Madelung transform.

From a physical point of view, the Madelung transform offers a hydrodynamic perspective on quantum phenomena, opening the possibility of applying fluid-mechanical intuition and techniques to the study of quantum dynamics. 
Historically, pilot-wave theory \cite{Bohm52a,Bohm52b}, in which the Schr\"odinger particle velocity is identified with the hydrodynamic velocity field obtained from the Madelung transform, provides a well-known example of how the fluid picture can lead to new viewpoints on the interpretation of quantum mechanics and on the scope and limitations of hidden-variable theories; see, for example, the discussion in \cite{Tak52}, as well as \cite{EinsteinPodolskyRosen35} and the surrounding literature on local realism. 
For the Pauli equation in the causal/hydrodynamic setting, see also \cite{Bohm55}. 
We stress, however, that the present work does not rely on any particular interpretation of quantum mechanics. 
Whether one works in the standard spinor formulation or in its Madelung form, the governing equations have the same mathematical content and admit the same empirical interpretation. 
Our concern here is instead with the structural consequences of modeling a quantum system as a fluid with spin, specifically in connection with Questions \ref{Q1} and \ref{Q2}.

The possible relevance of a fluid model for the spin-dependent terms---that is, for the terms involving the reduced Planck constant \(\hbar\) that encode the characteristic nonclassical features of quantum particles---has already been explored in several works. 
The basic idea is that the quantum potential appearing in the Madelung form of the Schr\"odinger equation may be interpreted as the kinetic energy of an internal motion, often associated with the so-called Zitterbewegung \cite{Schrodinger30,Hestenes90}, which then acts effectively as a potential energy on the Madelung fluid; see \cite{Salesi96,Recami98,Esposito99} and the references therein. 
In particular, by applying Maxwell's equations to determine the magnetic field generated by the spinning charge associated with such an internal fluid motion, one can recover the corresponding velocity field, evaluate its kinetic energy, and thereby obtain the Madelung form of the Schr\"odinger equation \cite{Sato25QF}. 
Thus, within that framework, a fluid with spin provides a fluid-mechanical derivation of the Schr\"odinger equation from the motion of charged fluid parcels.

Our first motivation is to show that the same fluid-with-spin model also yields the hydrodynamic form of the Pauli equation, once higher-order corrections to the kinetic energy of the internal spin motion are retained. 
In this way, the fluid-mechanical picture is shown not to be restricted to the scalar Schr\"odinger particle, but to extend to the nonrelativistic dynamics of particles carrying spin.

A second motivation comes from the observation that the Madelung form of the Pauli equation, and more generally the hydrodynamic formulation of quantum systems, also suggests a natural bridge between quantum mechanics and general relativity. 
Indeed, fluid systems are intrinsically compatible with general relativity, since they may be described by inserting a fluid stress-energy-momentum tensor into the Einstein field equations; see, for example, \cite{Frankel13,MTW17} on the perfect-fluid stress-energy tensor, and the classic papers of Tolman \cite{Tol30a,Tol30b}, where related ideas are used in the study of relativistic temperature. 
Starting from the hydrodynamic form of a quantum equation, one may derive the corresponding Madelung-fluid stress-energy-momentum tensor and use it as a source term in the Einstein equations, thereby suggesting an alternative route to the coupling between quantum dynamics and gravitation. 
This point of view, together with its scope and limitations, is discussed in \cite{Sato25QF}.

At the many-particle level, Bohmian and quantum-hydrodynamic formulations have also been developed in \cite{Renziehausen18,Renziehausen20}, where the relation between many-particle Bohmian mechanics and hydrodynamic balance laws is analyzed in detail. 
The viewpoint of the present work is different: rather than starting from a many-particle Schr\"odinger equation and projecting it to hydrodynamic variables, we ask whether interacting fluids with spin can themselves be used, in the converse direction, to represent a system of \(N\) Pauli particles.

From a mathematical point of view, the possibility of encoding quantum-mechanical equations into the dynamics of a fluid system may also be viewed as an instance of the \ti{universality} of the Euler equations governing ideal fluid flows, namely their ability to represent a broad class of differential equations after suitably enlarging the space dimension and choosing an appropriate ambient metric. 
This universality was developed by Tao \cite{Tao18,Tao20}, who showed that all quadratic ODEs satisfying suitable symmetry and conservation properties can be embedded into the Euler equations on a suitable Riemannian manifold. 
Later, it was shown in \cite{Cardona23} that non-autonomous dynamical systems can be extended to smooth steady Beltrami flows on suitable Riemannian manifolds.

Such universal features of the Euler equations naturally raise the question of whether fluid systems may be used to perform computation \cite{Moore90}, and more specifically whether suitably designed flows can realize universal Turing machines. 
It is useful to distinguish several conceptually different notions of ``fluid computing.'' 
At a first level, represented by the liquid-computing paradigm surveyed by Adamatzky \cite{Adamatzky19} and by recent reviews of fluidic logic for biohybrid computation \cite{Singh23}, the fluid serves primarily as a signal carrier or as a reactive medium, and computation is realized through explicit logical operations involving pressure levels, fluid jets, droplets, liquid marbles, or chemical wave fronts. 
At a second, more intrinsically dynamical, level, Cardona and collaborators \cite{Cardona21fc,Cardona25fc} showed that a steady Euler flow may itself realize universal computation, not through an externally prescribed gate architecture, but through the symbolic dynamics of its Lagrangian trajectories, leading to Turing-complete stationary Euler flows with undecidable orbit properties. 
The viewpoint adopted in the present work is different from both of these. 
Here the elementary computational unit is not a localized signal or gate, nor a single distinguished particle trajectory, but an entire Madelung fluid, regarded as a fluid-quantum-bit whose information is encoded in its collective hydrodynamic state.

The Madelung transform thus provides a concrete example in which the solution of a physical problem---namely, the computation of the dynamical state of a quantum particle---can be reformulated as the solution of a hydrodynamic system. 
Conversely, a suitably designed quantum computer could in principle be used to solve the evolution of a fluid system. 
Indeed, it has been suggested that fluid dynamics may provide a natural setting in which quantum advantage emerges, that is, a setting in which quantum computers achieve a computational speed-up over their classical counterparts; see, for example, \cite{Vidal03,Xu25,Cicero25} on the classical simulation of quantum computers, and \cite{Meng23,Meng24a,Meng24b} and the references therein on the quantum solution of fluid equations, including the Navier--Stokes equations.

Answering Question \ref{Q2} allows one to consider the converse perspective, namely the possibility of realizing or simulating an \(N\)-qubit quantum computer by means of an \(N\)-fluid system. 
It is not, however, the aim of the present paper to propose a practical architecture, to estimate computational speed-ups, or to assess implementation issues for such a fluid quantum computer. 
Our purpose is more modest and more structural: to show that the quantum--fluid correspondence developed here makes such a viewpoint mathematically meaningful.

The present manuscript is organized as follows. 
In section \ref{sec:Mad}, we review the Madelung transform for two-component spinor fields and discuss the inverse reconstruction problem. 
In section \ref{sec:Pauli}, we derive the hydrodynamic form of the Pauli equation and show that the Madelung correspondence propagates along the flow. 
In section \ref{sec:PauliDer}, we show that the same hydrodynamic system arises from a charged fluid endowed with an internal spin degree of freedom. 
In section \ref{sec:Hamiltonian}, we discuss the associated Hamiltonian structure. 
Finally, in section \ref{sec:NPauli}, we extend the construction to interacting many-particle systems and show that they can be reformulated as higher-dimensional fluids with spin.

\subsection{Acknowledgments}
NS acknowledges helpful discussions with P. J. Morrison during his visit to UT Austin, and with K. Seto.  
The research of NS was partially supported by JSPS KAKENHI Grant
No. 25K07267, 
No.  22H04936, and No. 24K00615.

\section{The Madelung transform of spinor fields}\label{sec:Mad}

The aim of this section is to describe the Madelung transform,
\eq{
\mc{M}:\Psi\mapsto\lr{\rho,\bol{u},\bol{s}},\qquad \mc{M}\lr{\Psi}=\lr{\rho,\bol{u},\bol{s}},\label{MT}
}
which maps a two-component spinor field $\Psi$ to a set of hydrodynamic variables, namely the density, velocity, and spin fields \(\lr{\rho,\bol{u},\bol{s}}\), and to discuss its inverse. 

Let
\eq{
\Psi=\Psi\lr{\bol{x},t}:\Omega\times[0,+\infty)\rightarrow\mathbb{C}^2,
}
where \(\Omega\subseteq\mathbb{R}^3\), denote a two-component spinor field. Introducing the reduced Planck constant $\hbar$, we write
\begin{equation}\label{spinor}
\Psi=\begin{pmatrix}\Psi_1\\\Psi_2\end{pmatrix}
=\begin{pmatrix}\sqrt{\rho_1} \exp\lrc{{\rm i}\frac{\theta_1}{\hbar}}\\\sqrt{\rho_2}\exp\lrc{{\rm i}\frac{\theta_2}{\hbar}}\end{pmatrix},
\end{equation}
for real valued densities $\rho_i\lr{\bol{x},t}\geq 0$ and phases $\theta_i\lr{\bol{x},t}$, $i=1,2$.

In order to obtain the hydrodynamic 
(Madelung) form of  the Pauli equation (see eq.~\eqref{Pauli} below), we need to introduce 
real valued hydrodynamic fields that encode the governing equation in fluid form. 
First, the density $\rho\lr{\bol{x},t}$ is defined as
\begin{equation}
\rho=\Psi^\dagger \Psi=
\begin{pmatrix}
\Psi_1^\ast,\Psi_2^\ast
\end{pmatrix}
\begin{pmatrix}
\Psi_1\\\Psi_2
\end{pmatrix}=\abs{\Psi_1}^2+\abs{\Psi_2}^2=\rho_1+\rho_2.\label{rho}
\end{equation}
Next, 
the velocity field $\bol{u}\lr{\bol{x},t}$ is given by
\eq{
\bol{u}=\frac{\hbar}{2m{\rm i}\rho}\lr{\Psi^\dagger\nabla\Psi-\nabla\Psi^{\dagger} \Psi}-\frac{q}{m}\bol{A}=\frac{1}{m\rho}\lr{\rho_1\nabla\theta_1+\rho_2\nabla\theta_2}-\frac{q}{m}\bol{A},\label{u}
}
where \(m\) and \(q\) denote the particle mass and charge, respectively, and $\bol{A}\lr{\bol{x},t}$ is the magnetic vector potential. 
Let \(\boldsymbol{\sigma}=\lr{\sigma_1,\sigma_2,\sigma_3}\) denote the Pauli matrices,
\begin{equation}
\sigma_1=
\begin{pmatrix}
0&1\\
1&0
\end{pmatrix},
\qquad
\sigma_2=
\begin{pmatrix}
0&-i\\
i&0
\end{pmatrix},
\qquad
\sigma_3=
\begin{pmatrix}
1&0\\
0&-1
\end{pmatrix}. 
\end{equation}
We introduce the relative phase
\begin{equation}
\varphi=\frac{\theta_2-\theta_1}{\hbar},
\end{equation}
and the angle $\eta$ by
\begin{equation}
\rho_1=\rho \cos^2\!\lr{\frac{\eta}{2}},
\qquad
\rho_2=\rho \sin^2\!\lr{\frac{\eta}{2}}.\label{eta}
\end{equation} 
The (normalized) spin field is then defined by 
\begin{equation}\label{s}
\bol{s}=\frac{\Psi^\dagger\bol{\sigma}\Psi}{\Psi^\dagger\Psi}=\frac{1}{\rho}\begin{pmatrix}
\Psi_1^\ast\Psi_2+\Psi_2^\ast\Psi_1\\{\rm i}\Psi_2^\ast\Psi_1-{\rm i}\Psi_1^\ast\Psi_2\\
\Psi_1^\ast\Psi_1-\Psi_2^\ast\Psi_2
\end{pmatrix}=
\begin{pmatrix}
\sin\eta\cos\varphi\\
\sin\eta\sin\varphi\\\cos\eta
\end{pmatrix}.  
\end{equation} 

Now observe that, for a given spinor field \(\Psi\) and vector potential \(\bol{A}\), the hydrodynamic variables \(\lr{\rho,\bol{u},\bol{s}}\) are uniquely determined by \eqref{rho}, \eqref{u}, and \eqref{s}. A natural question is whether the converse also holds, that is, whether a given triple of hydrodynamic fields \(\lr{\rho,\bol{u},\bol{s}}\) determines a unique global spinor field \(\Psi\).
In the case of the Madelung transform for the Schr\"odinger equation \cite{Madelung1926,Madelung1927}, the complex wave function
\eq{
\psi=\sqrt{\rho}\exp\lrc{{\rm i}\frac{\theta}{\hbar}}\label{wf},
}
determines a pair of hydrodynamic variables \(\lr{\rho,\bol{u}}\), where \(m\bol{u}=\nabla\theta\). Conversely, if \(\Omega\subseteq\mathbb{R}^3\) has trivial first de Rham cohomology and if a pair \(\lr{\rho,\bol{u}}\) satisfies \(\nabla\cp\bol{u}=\bol{0}\), then there exists a globally defined phase \(\theta\), unique up to an additive constant, such that $m\bol{u}=\nabla\theta$,
and hence one recovers a unique wave function \(\psi=\sqrt{\rho}\exp\lrc{{\rm i}\theta/\hbar}\) up to multiplication by a constant phase factor.  
However, the hydrodynamic pair \(\lr{\rho,\bol{u}}\) is generally not sufficient to reconstruct a unique global spinor \(\Psi\) in the Pauli case. Proposition~\ref{prop:spinor}, together with Remarks~\ref{rem:sch},  \ref{rem:unique}, and \ref{rem:local} clarifies this point: away from vacuum points and under a nondegeneracy assumption on the canonical vorticity, a spinor field \(\Psi\) may be reconstructed locally from hydrodynamic data \(\lr{\rho,\bol{u}}\). In general, this reconstruction is only local and need not be unique; the nonuniqueness reflects the gauge freedom inherent in Clebsch representations of differential \(1\)-forms \cite[sec.~IID]{Yoshida2009Clebsch}.

\begin{proposition}[Local reconstruction of a spinor field from hydrodynamic data]
\label{prop:spinor}
Let \(\rho\) and \(\bol{u}\) be smooth density and velocity fields on a domain
\(\Omega\subseteq\mathbb{R}^3\), and let \(\bol{A}\) be a smooth vector potential on \(\Omega\).
Fix \(\bol{x}_0\in\Omega\), and assume that
\eq{
\rho(\bol{x}_0)>0,
\qquad
\nabla\cp\bol{p}(\bol{x}_0)\neq \bol{0},
}
where the canonical momentum is defined by
\eq{
\bol{p}=m\bol{u}+q\bol{A}.
}
Then there exists a neighborhood \(U\subset\Omega\) of \(\bol{x}_0\), together with smooth functions 
$\rho_1,\rho_2,\theta_1,\theta_2:U\to\mathbb{R}$, 
such that
\eq{
\rho_1+\rho_2=\rho,
\qquad
\rho_1>0,
\qquad
\rho_2>0
\quad{\rm in}~~U,
}
and
\eq{
\frac{\rho_1}{\rho}\nabla(\theta_1-\theta_2)+\nabla\theta_2
=
m\bol{u}+q\bol{A}
\quad{\rm in}~~U.
}
Consequently, the spinor field
\begin{equation}
\Psi=
\begin{pmatrix}
\Psi_1\\
\Psi_2
\end{pmatrix},
\qquad
\Psi_i=\sqrt{\rho_i}\exp\lrc{{\rm i}\frac{\theta_i}{\hbar}},
\quad i=1,2,
\end{equation}
is smooth in \(U\).
\end{proposition}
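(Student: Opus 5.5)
The plan is to write the canonical momentum \(\bol{p}=m\bol{u}+q\bol{A}\) locally in Clebsch form \(\bol{p}=\nabla\alpha+\beta\nabla\gamma\) with \(\beta\) bounded away from \(0\) and \(1\), and then read off \(\rho_1,\rho_2,\theta_1,\theta_2\). Indeed, setting \(\theta_2=\alpha\), \(\theta_1-\theta_2=\gamma\), and \(\rho_1/\rho=\beta\), we get
\[
\frac{\rho_1}{\rho}\nabla(\theta_1-\theta_2)+\nabla\theta_2=\beta\nabla\gamma+\nabla\alpha=\bol{p},
\]
which is exactly the required identity. Positivity of \(\rho_1=\beta\rho\) and \(\rho_2=(1-\beta)\rho\) then follows from \(\rho>0\) near \(\bol{x}_0\), which holds by continuity, together with \(0<\beta<1\). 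Smoothness of \(\Psi\) is automatic, because \(\sqrt{\rho_i}\) is smooth where \(\rho_i>0\).

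The key step is a local Clebsch (Darboux) representation of the 1-form \(p=\bol{p}\cdot d\bol{x}\) near a point where \(dp\neq0\). Let \(\bol{\omega}=\nabla\cp\bol{p}\), which is nonzero at \(\bol{x}_0\). By the flow-box theorem, choose local coordinates \((y_1,y_2,y_3)\) in which \(\bol{\omega}\) is parallel to \(\partial_{y_3}\). Since \(dp\) is closed and has rank 2 with kernel spanned by \(\bol{\omega}\), its coefficients in these coordinates do not depend on \(y_3\). It is therefore a nonvanishing closed 2-form in the \((y_1,y_2)\) variables. By the 2D Darboux theorem (Moser's lemma, or simply integrating), we can write \(dp=d\beta_0\w d\gamma\) locally, where \(\beta_0,\gamma\) are functions of \((y_1,y_2)\) with independent differentials. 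Then \(p-\beta_0\,d\gamma\) is closed, so on a small ball (trivial cohomology) it equals \(d\alpha_0\) by the Poincaré lemma.

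The remaining issue is the range of \(\beta\), and I expect this to be the main obstacle. I will handle it with the gauge freedom of the Clebsch representation. For constants \(c\) and \(\lambda\neq0\), the replacement \(\beta=\lambda\beta_0+c\), \(\gamma'=\gamma/\lambda\), \(\alpha=\alpha_0-c\gamma/\lambda\) preserves \(\nabla\alpha+\beta\nabla\gamma'=\bol{p}\). First shift so that \(\beta_0(\bol{x}_0)=0\). Then take \(c=1/2\) and \(\lambda\) small, say with \(\abs{\lambda\beta_0}<1/4\) on a shrunken neighborhood \(U\). This gives \(1/4<\beta<3/4\) on \(U\).

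Finally, shrink \(U\) so that it is contained in the domains of all the constructions and \(\rho>0\) there. Then define \(\rho_1,\rho_2,\theta_1,\theta_2\) as above, and all the claimed properties hold. Note that the construction uses \(\nabla\cp\bol{p}\neq\bol{0}\) essentially: if \(\bol{\omega}\equiv\bol{0}\), then \(\beta\) would have to be constant along the level sets, and the representation would degenerate.
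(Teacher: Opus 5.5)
Your proposal is correct and follows essentially the same route as the paper. Both arguments use a local Clebsch--Darboux representation of the canonical momentum $1$-form near a point of nonvanishing canonical vorticity, the Poincar\'e lemma for the remaining closed part, and the affine gauge freedom of the Clebsch potentials to force the coefficient function into $(0,1)$; your $\lambda\beta_0+c$ plays the role of the paper's shift $\alpha\mapsto\alpha+C$ followed by division by $M>\sup_U\alpha$, after which $\rho_1,\rho_2,\theta_1,\theta_2$ are read off as you do. The only difference is that you prove the Darboux step by hand (flow-box coordinates straightening $\nabla\times\bol{p}$, then a one-variable integration), whereas the paper simply cites the Lie--Darboux theorem for a closed $2$-form of constant rank $2$.
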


\begin{proof}
Let
$p=\bol{p}^{\flat}$ 
denote the \(1\)-form associated with the canonical momentum \(\bol{p}=m\bol{u}+q\bol{A}\), and let
$w=dp$ 
denote the associated canonical vorticity \(2\)-form. Since \(w\) is closed and \(w(\bol{x}_0)\neq 0\), after restricting to a sufficiently small neighborhood \(V\subset\Omega\) of \(\bol{x}_0\), we may assume that \(w\neq 0\) throughout \(V\). Hence \(w\) has constant rank \(2\) in \(V\), and by the Lie--Darboux theorem \cite{Gracia24}[thm 2.10] there exist smooth functions \(\alpha,\beta\in C^\infty(V)\) such that
\eq{
w=d\alpha\w d\beta.
}
It follows that
\eq{
d\lr{p-\alpha\,d\beta}=dp-d\alpha\w d\beta=0.
}
After shrinking \(V\) if necessary, we may assume that \(V\) is contractible. The Poincar\'e lemma then yields a smooth function \(\gamma\in C^\infty(V)\) such that
\eq{
p=\alpha\,d\beta+d\gamma.
} 
We next use the elementary Clebsch gauge freedom
\eq{
\alpha\,d\beta+d\gamma
=
(\alpha+C)\,d\beta+d(\gamma-C\beta),
\qquad C\in\mathbb{R},
}
to arrange that \(\alpha>0\) in \(V\). Indeed, after possibly shrinking \(V\), the function \(\alpha\) is bounded below in \(V\), and one may choose \(C\) so that \(\alpha+C>0\) throughout \(V\). Replacing \((\alpha,\beta,\gamma)\) by
$(\alpha+C,\beta,\gamma-C\beta)$,
we may therefore assume, without loss of generality, that
$
\alpha>0$ in $V$. 
Since \(\rho(\bol{x}_0)>0\) and \(\rho\) is continuous, after shrinking \(V\) if necessary we may also assume that
$\rho>0$ in $V$. Choose an open set \(U\Subset V\) and a constant \(M>0\) such that
$M>\sup_U \alpha$.
Then $0<\frac{\alpha}{M}<1$ in $U$. 
Define
\eq{
\rho_1=\frac{\alpha}{M}\rho,
\qquad
\rho_2=\rho-\rho_1
=
\lr{1-\frac{\alpha}{M}}\rho,
}
and
\eq{
\theta_2=\gamma,
\qquad
\theta_1=\gamma+M\beta.
}
By construction,
\eq{
\rho_1+\rho_2=\rho,
\qquad
\rho_1>0,
\qquad
\rho_2>0
\quad{\rm in}~~U.
}
Moreover,
\eq{
\frac{\rho_1}{\rho}\,d(\theta_1-\theta_2)+d\theta_2
=
\frac{\alpha}{M}\,d(M\beta)+d\gamma
=
\alpha\,d\beta+d\gamma
=
p.
}
Applying the musical isomorphism \(\sharp\), we obtain
\eq{
\frac{\rho_1}{\rho}\nabla(\theta_1-\theta_2)+\nabla\theta_2
=
\bol{p}
=
m\bol{u}+q\bol{A}.
}
Finally, define
\eq{
\Psi_i=\sqrt{\rho_i}\exp\lrc{{\rm i}\frac{\theta_i}{\hbar}},
\qquad i=1,2.
}
Since \(\rho_i>0\) and \(\theta_i\) are smooth in \(U\), it follows that each \(\Psi_i\) is smooth in \(U\), and hence so is \(\Psi\).
\end{proof}

\begin{remark}[Vanishing canonical vorticity and reduction to a single-component wave function]\label{rem:sch}
If \(\nabla\cp\bol{p}=\bol{0}\) in some open set \(U\subset\Omega\), then, after shrinking \(U\) if necessary, the canonical momentum \(1\)-form \(p=\bol{p}^{\flat}\) is closed and hence exact (Poincar\'e lemma). Hence, there exists a smooth function \(\theta\in C^{\infty}(U)\) such that
$p=d\theta$. 
In this case, the two-component spinor field \(\Psi\) may be replaced locally by the scalar wave function
\eq{
\psi=\sqrt{\rho}\exp\lrc{{\rm i}\frac{\theta}{\hbar}}.
}
Thus, vanishing canonical vorticity corresponds locally to the standard Schr\"odinger--Madelung regime. In particular this regime does not support smooth vorticity \cite{Yoshida2016QuantumSpirals}.
\end{remark}

\begin{remark}[Nonuniqueness of the reconstructed spinor field]
\label{rem:unique}
The spinor field obtained in Proposition~\ref{prop:spinor} is generally not unique.
Indeed, although \(\Psi\) determines the hydrodynamic variables \(\lr{\rho,\bol{u}}\) uniquely through \eqref{rho} and \eqref{u} once \(\lr{\bol{A},\Phi}\) are fixed, the converse reconstruction depends on the choice of Clebsch potentials. Distinct triples \((\alpha,\beta,\gamma)\) may determine the same canonical momentum \(1\)-form \(p\), and hence the same hydrodynamic fields, while producing different phase decompositions and therefore different spinor fields. 
For example, if
\eq{
\alpha'=-\beta,
\qquad
\beta'=\alpha,
\qquad
\gamma'=\gamma+\alpha\beta,
}
then
\eq{
\alpha\,d\beta+d\gamma
=
\alpha'\,d\beta'+d\gamma'.
}
Accordingly, the spinor fields induced by \((\alpha,\beta,\gamma)\) and \((\alpha',\beta',\gamma')\) need not coincide.
Suppressing the nonuniqueness of the reconstructed spinor field requires additional physical constraints (e.g., boundary conditions), on the Clebsch potentials.  
\end{remark}

\begin{remark}[Local character of the reconstruction]\label{rem:local}
The construction in Proposition~\ref{prop:spinor} is local. A global reconstruction may fail in general, unless one allows the Clebsch potentials \(\alpha,\beta,\gamma\), and hence the phases \(\theta_i\), to be multivalued (for instance, angle-valued) functions. The existence of a global spinor field of this type depends on the topology of \(\Omega\), the boundary conditions, and the global structure of the velocity field \(\bol{u}\).
\end{remark} 
The locality and nonuniqueness of the inverse Schr\"odinger--Madelung transform 
$(\rho,\bol{u}) \mapsto \Psi$ explain why the global and unique reconstruction of a spinor field \(\Psi\) from hydrodynamic data requires additional information about the spin field \(\bol{s}\).
More precisely, in Proposition~\ref{prop:GlobalSpinor} we will show that, if the initial hydrodynamic fields
$(\rho_0,\bol{u}_0,\bol{s}_0)
=
\bigl(\rho(\bol{x},0),\bol{u}(\bol{x},0),\bol{s}(\bol{x},0)\bigr)$ 
are induced by a given initial spinor field \(\Psi_0=\Psi(\bol{x},0)\), that is,
$\mc{M}(\Psi_0)= (\rho_0,\bol{u}_0,\bol{s}_0)$,
then, under suitable existence and uniqueness assumptions for the hydrodynamic fields
$\bigl(\rho(\bol{x},t),\bol{u}(\bol{x},t),\bol{s}(\bol{x},t)\bigr)$, 
the hydrodynamic solution remains
exactly the Madelung transform of the corresponding Pauli solution
 at later times, $\mc{M}\lr{\Psi}=\lr{\rho,\bol{u},\bol{s}}$.

\section{The hydrodynamic form of the Pauli equation for a nonrelativistic spin-$\frac{1}{2}$ particle}\label{sec:Pauli}

The aim of this section is to obtain a hydrodynamic, or Madelung-type, formulation of the Pauli equation for a nonrelativistic spin-$\frac12$ particle, and to examine how solutions of the resulting hydrodynamic system give rise to corresponding spinor-field solutions of the standard Pauli equation. 
The hydrodynamic formulation of the Pauli equation is presented in \cite{Tak54}, but without derivation. Since
the details of such derivation are  
crucial to understand the role played by the quantum terms in the corresponding hydrodynamic system, we include it in full detail  here. 
Since the derivation is somewhat lengthy, we first state the final result in terms of the physically natural variables $(\rho,\bol{u},\bol{s})$, namely the density, velocity, and spin fields. The remainder of the section is devoted to deriving this system from the component equations obtained by substituting the polar form of the spinor into the Pauli equation.

\subsection{Hydrodynamic form of the Pauli equation and propagration of the Madelung correspondence}
The Pauli equation for the spinor \(\Psi\) is
\eq{
i\hbar\,\frac{\p\Psi}{\p t}
=
H\Psi,\label{Pauli}
}
where \(H\) denotes the differential (Hamiltonian) operator
\eq{
H=
\frac{1}{2m}
\left(-i\hbar\nabla-q\bol{A}\right)^2
+q\Phi
-\frac{q\hbar}{2m}\,\boldsymbol{\sigma}\cdot \bol{B}.\label{PauliHop}
}
As usual \(m\) and \(q\) denote the particle mass and charge, respectively, \(\lr{\bol{A}\lr{\bol{x},t},\Phi\lr{\bol{x},t}}\) are the electromagnetic potentials, \(\bol{B}=\nabla\cp\bol{A}\) is the external magnetic field, 
and \(\boldsymbol{\sigma}=\lr{\sigma_1,\sigma_2,\sigma_3}\) are the Pauli matrices. 
The following proposition summarizes the hydrodynamic form of the Pauli equation in the variables
$(\rho,\bol{u},\bol{s})$. 

\begin{proposition}[Hydrodynamic form of the Pauli equation]\label{prop:Paulihydro}
Let
$\Psi=\Psi(\bol{x},t):\Omega\times[0,\infty)\to \mathbb{C}^2$ 
be a smooth solution of the Pauli equation \eqref{Pauli}. 
Assume $\rho>0$. Then the fields $(\rho,\bol{u},\bol{s})$,  defined in equations \eqref{rho}, \eqref{u}, and \eqref{s}, 
satisfy the continuity equation
\begin{equation}
\frac{\partial \rho}{\partial t}+\nabla\cdot(\rho\bol{u})=0,
\label{eq:hydro-continuity}
\end{equation}
the momentum equation
\begin{equation}
m\rho\lr{\frac{\partial \bol{u}}{\partial t}+\bol{u}\cdot\nabla\bol{u}}
=
q\rho(\bol{E}+\bol{u}\times\bol{B})
+\frac{\hbar^2}{2m}\rho\,\nabla\!\lr{\frac{\Delta\sqrt{\rho}}{\sqrt{\rho}}} 
+\frac{q\hbar}{2m}\rho 
s_k\nabla B^k
-
\nabla\cdot\Pi,
\label{eq:hydro-momentum}
\end{equation}
where $\Pi$ is a symmetric contravariant spin-stress $2$-tensor with components 
\begin{equation}
\Pi^{ij}
=
\frac{\hbar^2\rho}{4m}\,\partial_i\bol{s}\cdot\partial_j\bol{s},
\label{eq:spin-stress-prop}
\end{equation}
and the spin evolution equation
\begin{equation}
\frac{\partial \bol{s}}{\partial t}
+(\bol{u}\cdot\nabla)\bol{s}
=
\bol{s}\cp\lrc{\frac{q\bol{B}}{m}
+\frac{\hbar}{2m}\lrs{\lr{\nabla\log \rho\cdot\nabla}\bol{s}+\Delta\bol{s}}}.
\label{eq:hydro-spin}
\end{equation}
\end{proposition}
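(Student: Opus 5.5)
The plan is to avoid the component-wise polar substitution as long as possible and work with covariant bilinears of $\Psi$. Throughout, write the covariant derivative as $D=\nabla-\frac{{\rm i}q}{\hbar}\bol{A}$ and the kinetic momentum operator as $-{\rm i}\hbar D$, so that $m\rho\bol{u}=\hbar\,{\rm Im}(\Psi^\dagger D\Psi)$. Since $\rho>0$, we have $\Psi\neq 0$ everywhere, and we can use the local decomposition $\Psi=\sqrt{\rho}\,e^{{\rm i}\chi}\chi_0$. Here $\chi_0$ is a unit spinor with $\chi_0^\dagger\bol{\sigma}\chi_0=\bol{s}$. Concretely, $\chi_0=(\cos(\eta/2),\sin(\eta/2)e^{{\rm i}\varphi})^T$ and $\chi=\theta_1/\hbar$, as in \eqref{spinor}, \eqref{eta}.

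\emph{Continuity equation.} Take $\partial_t\rho=\Psi^\dagger\partial_t\Psi+c.c.$ and insert \eqref{Pauli}. The potential term $q\Phi$ and the Zeeman term $\bol{\sigma}\cdot\bol{B}$ are Hermitian and pointwise, so they drop out. The kinetic term gives $-\nabla\cdot\big(\frac{\hbar}{m}{\rm Im}(\Psi^\dagger D\Psi)\big)=-\nabla\cdot(\rho\bol{u})$. This yields \eqref{eq:hydro-continuity}.

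\emph{Spin equation.} Compute $\partial_t(\rho\bol{s})=\partial_t(\Psi^\dagger\bol{\sigma}\Psi)$ in the same way.
\begin{itemize}
\item The Zeeman term gives $\frac{q}{m}\rho\,\bol{s}\cp\bol{B}$ with the sign fixed by $[\sigma_j,\sigma_k]=2{\rm i}\epsilon_{jkl}\sigma_l$. This term is then to be reconciled with the sign convention of \eqref{eq:hydro-spin}.
\item The kinetic term gives a divergence, $\partial_t(\rho s_k)+\partial_j J^j_k=\dots$, with spin current $J^j_k=\frac{\hbar}{m}{\rm Im}(\Psi^\dagger\sigma_k D_j\Psi)$.
\item Inserting the decomposition gives $J^j_k=\rho u_j s_k+\frac{\hbar}{2m}\rho(\bol{s}\cp\partial_j\bol{s})_k$. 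This uses the identity ${\rm Im}(\chi_0^\dagger\sigma_k\partial_j\chi_0)-s_k{\rm Im}(\chi_0^\dagger\partial_j\chi_0)=\frac12(\bol{s}\cp\partial_j\bol{s})_k$. I would verify this identity directly from the explicit form of $\chi_0$, or more cleanly in a frame where $\bol{s}=e_3$.
\end{itemize}
Subtract $\bol{s}$ times the continuity equation and use $\nabla\cdot(\rho\,\bol{s}\cp\nabla\bol{s})=\rho\,\bol{s}\cp(\nabla\log\rho\cdot\nabla\bol{s}+\Delta\bol{s})$. The cross term $\nabla\bol{s}\cp\nabla\bol{s}$ vanishes by antisymmetry, and this produces \eqref{eq:hydro-spin}.

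\emph{Momentum equation.} This is the main obstacle. The route is to compute the phase-type evolution, namely the Hamilton--Jacobi equation for the canonical momentum $\bol{p}=m\bol{u}+q\bol{A}$, rather than a stress balance directly.
\begin{enumerate}
\item Write $m\bol{u}=\hbar\nabla\chi+\hbar\,{\rm Im}(\chi_0^\dagger\nabla\chi_0)-q\bol{A}$.
\item Project the Pauli equation onto $\Psi^\dagger$ and take the real part. This gives the eikonal equation
\[
-\hbar\,\partial_t\chi-\hbar\,{\rm Im}(\chi_0^\dagger\partial_t\chi_0)=\frac{m}{2}|\bol{u}|^2+q\Phi-\frac{\hbar^2}{2m}\frac{\Delta\sqrt\rho}{\sqrt\rho}+\frac{\hbar^2}{8m}|\nabla\bol{s}|^2-\frac{q\hbar}{2m}\bol{s}\cdot\bol{B}.
\]
The term $\frac{\hbar^2}{8m}|\nabla\bol{s}|^2$ arises from $|D\chi_0|^2$ minus the square of its phase part, which equals $\frac14|\nabla\bol{s}|^2$.
\item Take the gradient. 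The noncommuting Berry-connection term $\hbar\,{\rm Im}(\chi_0^\dagger\nabla\chi_0)$ has time derivative minus gradient equal to its curvature. This curvature equals $\frac{\hbar}{2}\bol{s}\cdot(\partial_t\bol{s}\cp\nabla\bol{s})$, i.e., half the solid-angle form.
\item Substitute $\partial_t\bol{s}$ from \eqref{eq:hydro-spin}. The advective part combines with the vorticity of the Berry connection, $\frac{\hbar}{2}\epsilon\,\bol{s}\cdot(\partial_i\bol{s}\cp\partial_j\bol{s})$, to give the Lorentz-like $\bol{u}\cp\nabla\cp$ term needed to complete $\bol{u}\cdot\nabla\bol{u}$.
\item The $\bol{B}$ part of $\partial_t\bol{s}$ combines with $-\frac{q\hbar}{2m}\nabla(\bol{s}\cdot\bol{B})$ to leave exactly $\frac{q\hbar}{2m}s_k\nabla B^k$.
\item The $\Delta\bol{s}$ part combines with $-\nabla\frac{\hbar^2}{8m}|\nabla\bol{s}|^2$. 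Using $|\bol{s}|=1$ and hence $\bol{s}\cdot\partial\bol{s}=0$, their sum becomes $-\frac1\rho\partial_j\Pi^{ij}$. This is the identity $\partial_j(\rho\,\partial_i\bol{s}\cdot\partial_j\bol{s})=\rho\,\partial_i\bol{s}\cdot(\Delta\bol{s}+\nabla\log\rho\cdot\nabla\bol{s})+\frac{\rho}{2}\partial_i|\nabla\bol{s}|^2$, together with the fact that $\partial_i\bol{s}\cdot(\bol{s}\cp\bol{X})\,\bol{s}\cdot\dots$ reduces via the orthonormal frame $\{\bol{s},\partial\bol{s}\}$.
\end{enumerate}
The electric field arises from $-\nabla\Phi-\partial_t\bol{A}$. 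Multiplying by $\rho$ gives \eqref{eq:hydro-momentum}.

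I expect the last step, matching the Berry-curvature terms to $\nabla\cdot\Pi$, to be where the bookkeeping is delicate. I would check it first in the special case $\bol{B}=0$ with $\bol{s}$ varying in one direction.
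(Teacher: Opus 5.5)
Your route is sound and genuinely different from the paper's. However, the spin-current identity you state has the wrong sign, and as written it produces the wrong spin equation.

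Check it in the frame you suggest. At a point take $\chi_0=(1,0)^T$ and $\partial_j\chi_0=({\rm i}\alpha,\,b_1+{\rm i}b_2)^T$ with $\alpha,b_1,b_2$ real; normalization forces the first entry to be imaginary. Then $\bol{s}=e_3$ and $\partial_j\bol{s}=(2b_1,2b_2,0)$. The vector with components ${\rm Im}(\chi_0^\dagger\sigma_k\partial_j\chi_0)-s_k\,{\rm Im}(\chi_0^\dagger\partial_j\chi_0)$ is $(b_2,-b_1,0)$, whereas $\tfrac12\bol{s}\times\partial_j\bol{s}=(-b_2,b_1,0)$.

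So the identity holds with $-\tfrac12(\bol{s}\times\partial_j\bol{s})_k$, and the spin current is $J^j_k=\rho u_js_k-\frac{\hbar}{2m}\rho(\bol{s}\times\partial_j\bol{s})_k$. With your sign, subtracting $\bol{s}$ times the continuity equation gives $-\frac{\hbar}{2m}\bol{s}\times[\Delta\bol{s}+(\nabla\log\rho\cdot\nabla)\bol{s}]$, the opposite of \eqref{eq:hydro-spin}. With the corrected sign you get \eqref{eq:hydro-spin} exactly. The Zeeman term needs no reconciling: $\frac{\rm i}{\hbar}\Psi^\dagger[H_Z,\sigma_k]\Psi=\frac{q}{m}\rho(\bol{s}\times\bol{B})_k$ already.

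Your momentum argument is then correct. The Berry-curvature formula $\partial_t\mathcal{A}_j-\partial_j\mathcal{A}_t=\tfrac12\bol{s}\cdot(\partial_t\bol{s}\times\partial_j\bol{s})$, with $\mathcal{A}={\rm Im}(\chi_0^\dagger d\chi_0)$, has the right sign. The fact you need in steps 5 and 6 is simply $\bol{s}\cdot\big((\bol{s}\times\bol{X})\times\partial_j\bol{s}\big)=-\bol{X}\cdot\partial_j\bol{s}$, which follows from $\bol{s}\cdot\partial_j\bol{s}=0$. Apply it with $\bol{X}=\frac{q}{m}\bol{B}$ and with $\bol{X}=\frac{\hbar}{2m\rho}\partial_k(\rho\,\partial_k\bol{s})$. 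This step is also a built-in consistency check. Feeding it the wrong-sign spin equation would leave $\frac{\hbar^2}{4m\rho}\partial_k(\rho\,\partial_k\bol{s}\cdot\partial_j\bol{s})-\frac{\hbar^2}{4m}\partial_j|\nabla\bol{s}|^2$, which is not $-\rho^{-1}\partial_k\Pi^{kj}$.

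Compared with the paper: the paper inserts the polar form into both components to get the four real equations \eqref{rho1}--\eqref{theta2} for $(\rho_1,\theta_1,\rho_2,\theta_2)$. It derives the momentum equation first, in conservative form with flux $\rho_1\bol{p}_1\otimes\bol{p}_1+\rho_2\bol{p}_2\otimes\bol{p}_2$. It then cancels the leftover $\nabla\rho$, $\nabla\varphi$, $\nabla\eta$ components one by one, the key input being the identity \eqref{eq:Vq-difference} for $V_{q2}-V_{q1}$. The spin equation is derived independently from separate $\eta$ and $\varphi$ equations reassembled in the frame $(\bol{s},\bol{s}_\eta,\bol{s}_\varphi)$.

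You reverse the order: spin first from conservation of the spin density, then momentum from the eikonal equation. You replace the component bookkeeping by two geometric facts. The Fubini--Study metric, $|\partial_j\chi_0|^2-\mathcal{A}_j^2=\tfrac14|\partial_j\bol{s}|^2$, produces the $\frac{\hbar^2}{8m}|\nabla\bol{s}|^2$ potential. The Berry curvature (Mermin--Ho relation) supplies the spin part of the vorticity and, through the spin equation, the $s_k\nabla B^k$ and $\nabla\cdot\Pi$ terms.

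Your argument is shorter, shows where each term originates, and is more robust. Continuity and spin follow from globally defined bilinears. Because the final momentum equation is gauge invariant, you can switch local sections near points where $\Psi_1=0$. The paper's formulas, by contrast, tacitly need $\rho_1\rho_2>0$ (e.g.\ $\sqrt{\rho_2/\rho_1}$ in \eqref{theta1} and divisions by $\sin\eta$). In exchange, the paper's computation is entirely elementary, does not use the spin equation to close the momentum balance, and yields the conservative form \eqref{eq:momentum-conservative} along the way.
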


Equation \eqref{eq:hydro-continuity} is the continuity equation for the density, equation \eqref{eq:hydro-momentum} is the Euler-type momentum equation with Lorentz force, quantum (Bohm) force,  magnetic-moment force, and spin stress, while equation \eqref{eq:hydro-spin} describes the advection and precession of the spin field. 
The proof of proposition \ref{prop:Paulihydro} is given in subsection \ref{sec:proofofprop3.1}. 

We may now apply Proposition~\ref{prop:Paulihydro} to show that the hydrodynamic system
\eqref{eq:hydro-continuity}, \eqref{eq:hydro-momentum}, and \eqref{eq:hydro-spin}
propagates the Madelung correspondence: 
\begin{proposition}[Global propagation of the Madelung correspondence]\label{prop:GlobalSpinor}
Let \(\Psi\) be a smooth solution of the Pauli equation \eqref{Pauli} on \([0,T]\), \(T>0\), and assume that its density
$\rho=\Psi^\dagger \Psi$ 
is strictly positive on \(\Omega\times[0,T]\), so that \(\mc M(\Psi)\) is well defined.
Let
\eq{
\Psi_0=\Psi(\bol{x},0),
\qquad
\mc M(\Psi_0)=(\rho_0,\bol{u}_0,\bol{s}_0).
}
Assume moreover that the hydrodynamic Pauli system
\eqref{eq:hydro-continuity}, \eqref{eq:hydro-momentum}, and \eqref{eq:hydro-spin}
admits a unique smooth solution \((\rho,\bol{u},\bol{s})\) on \([0,T]\) with initial data
\eq{
(\rho,\bol{u},\bol{s})|_{t=0}=(\rho_0,\bol{u}_0,\bol{s}_0).
}
Then
\eq{
\mc M(\Psi)=(\rho,\bol{u},\bol{s})
\qquad \rm{on }~~\Omega\times[0,T].
}
In particular, once the initial spinor \(\Psi_0\) is fixed, the corresponding hydrodynamic evolution determines the Pauli spinor evolution uniquely up to an overall constant phase factor.
\end{proposition}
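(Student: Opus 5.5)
The argument rests on Proposition~\ref{prop:Paulihydro} combined with the assumed uniqueness of the hydrodynamic solution. Define $(\tilde\rho,\tilde{\bol u},\tilde{\bol s})=\mc M(\Psi)$ on $\Omega\times[0,T]$. This is well defined and smooth because $\Psi$ is smooth and $\rho=\Psi^\dagger\Psi>0$; smoothness of $\tilde{\bol s}=\Psi^\dagger\bol\sigma\Psi/\rho$ and of $\tilde{\bol u}$ follows directly from \eqref{u} and \eqref{s}, where no division by vanishing quantities occurs. Proposition~\ref{prop:Paulihydro} then shows that $(\tilde\rho,\tilde{\bol u},\tilde{\bol s})$ satisfies \eqref{eq:hydro-continuity}, \eqref{eq:hydro-momentum} and \eqref{eq:hydro-spin} on $[0,T]$. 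At $t=0$ it equals $\mc M(\Psi_0)=(\rho_0,\bol u_0,\bol s_0)$. The assumed uniqueness of smooth solutions with these data gives $(\tilde\rho,\tilde{\bol u},\tilde{\bol s})=(\rho,\bol u,\bol s)$, that is, $\mc M(\Psi)=(\rho,\bol u,\bol s)$. One should state explicitly that the uniqueness class (smooth solutions on $[0,T]$) contains $\mc M(\Psi)$; this is the only point to check in the first claim.

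The ``in particular'' clause requires a separate argument that the hydrodynamic data, together with $\Psi_0$, determine $\Psi$ up to a constant phase. Let $\Psi'$ be any smooth Pauli solution with $\Psi'(\cdot,0)=\Psi_0$. Linear uniqueness for the Pauli equation, under the same standing assumptions, already forces $\Psi'=\Psi$. The more meaningful statement is that $\Psi$ can be reconstructed from $(\rho,\bol u,\bol s)$ and $\Psi_0$ up to a phase, and I will prove it as follows. Suppose two spinors $\Psi,\Psi'$ with $\rho,\rho'>0$ satisfy $\mc M(\Psi)=\mc M(\Psi')$. Equal $\rho$ and $\bol s$ fix $\rho_1,\rho_2$ through $\eta$ in \eqref{eta}, and fix the relative phase $\varphi$ modulo $2\pi$ wherever $\sin\eta\neq0$. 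Hence $\Psi'=e^{{\rm i}\chi/\hbar}\Psi$ for a single real function $\chi$: the projective spinor $\Psi/\sqrt\rho$ is determined by $\bol s$ through the Hopf map. Continuity of $\Psi,\Psi'$ extends this to the points where $\sin\eta=0$. Substituting $\Psi'=e^{{\rm i}\chi/\hbar}\Psi$ into \eqref{u} gives $\bol u'=\bol u+\nabla\chi/m$, so equal velocities force $\nabla\chi=0$, and $\chi=\chi(t)$ depends on time only.

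To remove the residual time dependence, I would substitute $\Psi'=e^{{\rm i}\chi(t)/\hbar}\Psi$ into \eqref{Pauli}. If both are Pauli solutions, this yields $\dot\chi\,\Psi=0$, hence $\chi$ is constant because $\rho>0$. Alternatively, matching the initial condition $\Psi'(0)=\Psi_0$ fixes the constant. The conclusion is that the hydrodynamic evolution, supplemented by $\Psi_0$, determines the Pauli spinor up to an overall constant phase factor, exactly as claimed.

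I expect the main technical obstacle to be the phase-ambiguity step: showing that $\bol s$ and $\rho$ determine $\Psi$ up to a pointwise phase even at points where $\sin\eta=0$, where $\varphi$ is undefined. There I would argue with the projective spinor $\Psi/\sqrt\rho\in S^3$ directly. Equal images under the Hopf map $S^3\to S^2$ mean the two spinors lie on the same $U(1)$ fiber at each point, so $\Psi'=e^{{\rm i}\chi/\hbar}\Psi$ pointwise. Smoothness of $\chi$ follows locally from $e^{{\rm i}\chi/\hbar}=\Psi^\dagger\Psi'/\rho$. On a domain that is not simply connected, $\chi$ may be multivalued, but $\nabla\chi$ is still single valued, and this is all the velocity argument requires.
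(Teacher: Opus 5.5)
Your first paragraph is the paper's proof: Proposition~\ref{prop:Paulihydro} makes $\mc M(\Psi)$ a smooth solution of the hydrodynamic system with data $\mc M(\Psi_0)$, and the assumed uniqueness does the rest.

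The paper's proof ends there. It supports the ``in particular'' clause only through Remark~\ref{rem:GlobalSpinor}, which reconstructs $\Psi$ chartwise from $\eta$ and $\varphi$, integrates $\nabla\theta_i$, handles $\sin\eta=0$ separately, and patches the pieces together. Your Hopf-fiber argument is a different and cleaner route. You write $\Psi'=f\Psi$ with the smooth $U(1)$-valued function $f=\Psi^\dagger\Psi'/\rho$. This treats the degenerate points $\sin\eta=0$ and multiply connected $\Omega$ uniformly, because the velocity shift is $\frac{\hbar}{m{\rm i}}\bar f\nabla f$, which is single valued. It also makes explicit something the remark glosses over: $(\rho,\bol u,\bol s)$ alone fixes $\Psi$ only up to a time-dependent global phase. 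That phase is removed exactly by requiring the reconstructed spinor to satisfy the Pauli equation, which is your $\dot\chi\,\Psi=0$ step.

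Two small caveats. First, matching $\Psi'(\cdot,0)=\Psi_0$ only pins down $\chi(0)$ modulo $2\pi\hbar$. It complements the $\dot\chi\,\Psi=0$ step but cannot replace it, so drop the word ``alternatively.'' Second, your aside about linear uniqueness for the Pauli equation is not available under the stated hypotheses, since no boundary or decay conditions are imposed. Your argument does not use it, so nothing breaks.
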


\begin{proof}
By Proposition~\ref{prop:Paulihydro}, since \(\Psi\) is a smooth solution of the Pauli equation and \(\rho>0\), its Madelung transform
\eq{
(\rho',\bol{u}',\bol{s}')=\mc M(\Psi)
}
is a smooth solution of the hydrodynamic system
\eqref{eq:hydro-continuity}, \eqref{eq:hydro-momentum}, and \eqref{eq:hydro-spin}
on \([0,T]\). Moreover, at \(t=0\) one has
\eq{
(\rho',\bol{u}',\bol{s}')|_{t=0}
=
\mc M(\Psi_0)
=
(\rho_0,\bol{u}_0,\bol{s}_0).
}
By the assumed uniqueness of hydrodynamic solutions with these initial data, it follows that
\eq{
(\rho',\bol{u}',\bol{s}')=(\rho,\bol{u},\bol{s})
\qquad \rm{on }~~\Omega\times[0,T].
}
Hence
\eq{
\mc M(\Psi)=(\rho,\bol{u},\bol{s}),
}
as claimed.
\end{proof}

\begin{remark}[Reconstruction of the spinor field from \((\rho,\bol{u},\bol{s})\)]\label{rem:GlobalSpinor}
Under the assumptions of Proposition~\ref{prop:GlobalSpinor}, the spinor field \(\Psi\) can be reconstructed globally from the hydrodynamic fields \((\rho,\bol{u},\bol{s})\), uniquely up to an overall constant phase factor. 
Indeed, on regions where \(s^1\neq 0\), one may recover the angles \(\eta\) and \(\varphi\) from
\eq{
\eta=\arccos\lr{s^3},
\qquad
\varphi=\arctan\lr{\frac{s^2}{s^1}}.
}
If 
\(s^2\neq 0\), one similarly has
\eq{
\eta=\arccos\lr{s^3},
\qquad
\varphi={\rm arccot}\lr{\frac{s^1}{s^2}}.
}
In both cases, one then reconstructs, up to constant phase factors, the phases \(\theta_i\), \(i=1,2\), by integrating
\eq{
\nabla\theta_1
=
m\bol{u}+q\bol{A}
-
\sin^2\lr{\frac{\eta}{2}}\hbar\nabla\varphi,
\qquad
\nabla\theta_2
=
m\bol{u}+q\bol{A}
+
\cos^2\lr{\frac{\eta}{2}}\hbar\nabla\varphi.
}
The spinor field is then recovered as
\begin{equation}
\Psi=
\begin{pmatrix}
\sqrt{\rho}\cos\lr{\frac{\eta}{2}}\exp\lrc{{\rm i}\frac{\theta_1}{\hbar}}\\[1mm]
\sqrt{\rho}\sin\lr{\frac{\eta}{2}}\exp\lrc{{\rm i}\frac{\theta_2}{\hbar}}
\end{pmatrix}.
\end{equation} 
On the other hand, if \(s^1=s^2=0\), then \(s^3=\pm1\), so that \(\eta=\lrc{0,\pi}\), and either \(\rho_2\) or \(\rho_1\) vanishes identically. For example, if \(s^3=1\), then \(\eta=0\), and
\eq{
\nabla\theta_1=m\bol{u}+q\bol{A},
}
while the corresponding spinor takes the form
\eq{
\Psi_1=\sqrt{\rho}\exp\lrc{{\rm i}\frac{\theta_1}{\hbar}},
\qquad
\Psi_2=0,
}
corresponding to a spin-up configuration. Similarly, if \(s^3=-1\), then \(\eta=\pi\), and one obtains a spin-down configuration
\eq{
\Psi_1=0,
\qquad
\Psi_2=\sqrt{\rho}\exp\lrc{{\rm i}\frac{\theta_2}{\hbar}},
\qquad
\nabla\theta_2=m\bol{u}+q\bol{A}.
} 
Thus, although the formulas for \(\eta\) and \(\varphi\) are written in local coordinates, Proposition~\ref{prop:GlobalSpinor} implies that the locally reconstructed expressions are all compatible with the unique Pauli solution determined by the initial spinor \(\Psi_0\). Hence they patch together to yield a globally defined spinor field \(\Psi\), unique up to an overall constant phase factor.
\end{remark}

\begin{remark}[Interpretation of Proposition~\ref{prop:GlobalSpinor}]
Proposition~\ref{prop:GlobalSpinor} should be interpreted as a propagation-of-compatibility result.
Section~2 shows that, in general, the inverse reconstruction of a spinor from \((\rho,\bol{u})\) is only local
and need not be unique. The additional spin field \(\bol{s}\), together with the initial spinor \(\Psi_0\) and the
uniqueness of the hydrodynamic evolution, removes this ambiguity along the time evolution.
Accordingly, the proposition does not assert that an arbitrary hydrodynamic triple \((\rho,\bol{u},\bol{s})\)
admits a globally defined inverse \(\mc M^{-1}\) in complete generality; rather, it states that if the initial
hydrodynamic data come from a spinor field \(\Psi_0\), then the unique hydrodynamic solution remains
exactly the Madelung transform of the corresponding Pauli solution for all \(t\in[0,T]\). A schematic illustration of this correspondence is given in Figure~\ref{fig:MadelungPropagation}.
\end{remark}


\begin{figure}[t]
\centering

\begin{subfigure}[t]{0.95\textwidth}
\centering
\[
\begin{tikzcd}[column sep=6em, row sep=4em]
\Psi_0
\arrow[r, "\mc M"]
\arrow[d, "{\rm Pauli\ evolution}"']
&
(\rho_0,\bol{u}_0,\bol{s}_0)
\arrow[d, "{\rm hydrodynamic\ evolution}"]
\\
\Psi(\bol{x},t)
\arrow[r, "\mc M"']
&
(\rho(\bol{x},t),\bol{u}(\bol{x},t),\bol{s}(\bol{x},t))
\end{tikzcd}
\]
\caption{}
\label{fig:MadelungPropagation-top}
\end{subfigure}

\vspace{1em}

\begin{subfigure}[t]{0.95\textwidth}
\centering
\[
\begin{tikzcd}[column sep=6em, row sep=4em]
\Psi_0
\arrow[r, "\mc M"]
\arrow[d, "{\rm Pauli\ evolution}"']
&
(\rho_0,\bol{u}_0,\bol{s}_0)
\arrow[d, "{\rm hydrodynamic\ evolution}"]
\\
\Psi(\bol{x},t)
&
(\rho(\bol{x},t),\bol{u}(\bol{x},t),\bol{s}(\bol{x},t))
\arrow[l, dashed, "\mc M^{-1}"']
\end{tikzcd}
\]
\caption{}
\label{fig:MadelungPropagation-bottom}
\end{subfigure}

\caption{
Schematic illustration of the Madelung correspondence.
Top: starting from compatible initial data, evolution by the Pauli equation on the spinor side and evolution by the hydrodynamic system on the fluid side commute with the Madelung transform \(\mc M\).
Bottom: the top arrow emphasizes that the initial hydrodynamic data are induced from the initial spinor \(\Psi_0\) by \(\mc M\), while the dashed lower arrow indicates the inverse reconstruction in the sense of Remark~\ref{rem:GlobalSpinor}; once \(\Psi_0\) is fixed, the compatible branch of \(\mc M^{-1}\) reconstructs the Pauli spinor from the hydrodynamic variables, uniquely up to an overall constant phase factor.
}
\label{fig:MadelungPropagation}
\end{figure}


\subsection{Proof of proposition~\ref{prop:Paulihydro}}\label{sec:proofofprop3.1}


\medskip
\noindent 
The proof of proposition~\ref{prop:Paulihydro} involves three main steps: 
starting from the real and imaginary parts of the spinor form of the Pauli equation \eqref{Pauli}, we (i) obtain the continuity equation, (ii) derive the momentum equation in the material-derivative form \eqref{eq:hydro-momentum}, and (iii) obtain the spin equation \eqref{eq:hydro-spin} by taking suitable linear combinations of the phase and continuity equations.

\medskip
\noindent (i) \underline{\ti{Derivation of the continuity equation.}}
We begin by expressing the Pauli equation~\eqref{Pauli} in terms of its components $\Psi_1$ and $\Psi_2$. Explicitly, equation~\eqref{Pauli} reads as
\sys{
{\rm i}\hbar\frac{\p\Psi_1}{\p t}=&\lrs{\frac{1}{2m}
\lr{-\hbar^2\Delta+{\rm i}\hbar q\nabla\cdot\bol{A}+2{\rm i}\hbar q\bol{A}\cdot\nabla+q^2\bol{A}^2}+q\Phi
}\Psi_1-\frac{q\hbar}{2m}
\lrs{B^3\Psi_1+\lr{B^1-{\rm i}B^2}\Psi_2},\\
{\rm i}\hbar\frac{\p\Psi_2}{\p t}=&\lrs{\frac{1}{2m}
\lr{-\hbar^2\Delta+{\rm i}\hbar q\nabla\cdot\bol{A}+2{\rm i}\hbar q\bol{A}\cdot\nabla+q^2\bol{A}^2}+q\Phi
}\Psi_2
+\frac{q\hbar}{2m}\lrs{B^3\Psi_2-\lr{B^1+{\rm i}B^2}\Psi_1}.
}{Pauli_}
Substituting the Madelung form of the spinor field \eqref{spinor} 
into the Pauli equation \eqref{Pauli_}, and defining the quantum potential energy of particle $i$ as  
\eq{
V_{qi}=-\frac{\hbar^2}{2m}\frac{\Delta\sqrt{\rho_i}}{\sqrt{\rho_i}}=-\frac{\hbar^2}{2m}\lr{\frac{\Delta\rho_i}{2\rho_i}-\frac{\abs{\nabla\rho_i}^2}{4\rho_i^2}},\qquad i=1,2,} one obtains the equations
\eq{\label{Pauli1}
\lr{\frac{{\rm i}\hbar}{2\rho_1}\frac{\p\rho_1}{\p t}-\frac{\p\theta_1}{\p t}}\Psi_1=&\lr{V_{q1}+\frac{\abs{\nabla\theta_1}^2}{2m}+q\Phi-\frac{q\bol{A}\cdot\nabla\theta_1}{m}+\frac{q^2\bol{A}^2}{2m}}\Psi_1\\
&+\frac{{\rm i}\hbar}{2m}\lr{
-\frac{\nabla\rho_1\cdot\nabla\theta_1}{\rho_1}-\Delta\theta_1
+\frac{q\bol{A}\cdot\nabla\rho_1}{\rho_1}+q\nabla\cdot\bol{A}}\Psi_1\\
&-\frac{q\hbar}{2m}\lrs{B^3\Psi_1+\lr{B^1-{\rm i}B^2}\Psi_2},
}
and 
\eq{
\lr{\frac{{\rm i}\hbar}{2\rho_2}\frac{\p\rho_2}{\p t}-\frac{\p\theta_2}{\p t}}\Psi_2=&\lr{V_{q2}+\frac{\abs{\nabla\theta_2}^2}{2m}+q\Phi-\frac{q\bol{A}\cdot\nabla\theta_2}{m}+\frac{q^2\bol{A}^2}{2m}}\Psi_2\\
&+\frac{{\rm i}\hbar}{2m}\lr{
-\frac{\nabla\rho_2\cdot\nabla\theta_2}{\rho_2}-\Delta\theta_2
+\frac{q\bol{A}\cdot\nabla\rho_2}{\rho_2}+q\nabla\cdot\bol{A}}\Psi_1\\
&+\frac{q\hbar}{2m}\lrs{B^3\Psi_2-\lr{B^1+{\rm i}B^2}\Psi_1}.\label{Pauli2}
} 
Separating real and imaginary parts, equations \eqref{Pauli1} and \eqref{Pauli2} are equivalent to the system of equations
\eq{
\frac{\p\rho_1}{\p t}=-\frac{1}{m}\nabla\cdot\lrs{\rho_1\lr{\nabla\theta_1-q\bol{A}}}
-\frac{q\sqrt{\rho_1\rho_2}}{m}\lr{
B^1\sin\varphi-B^2\cos\varphi
}
,\label{rho1}
}
\eq{
\frac{\p\theta_1}{\p t}=&-
{V_{q1}
-\frac{1}{2m}\lr{\nabla\theta_1-q\bol{A}}^2
-q\Phi
}+\frac{q\hbar}{2m}\lrs{
B^3+\sqrt{\frac{\rho_2}{\rho_1}}\lr{B^1\cos\varphi
+
B^2\sin\varphi
}},\label{theta1} 
}
\eq{
\frac{\p\rho_2}{\p t}=-\frac{1}{m}\nabla\cdot\lrs{\rho_2\lr{\nabla\theta_2-q\bol{A}}}
+\frac{q\sqrt{\rho_1\rho_2}}{m}\lr{B^1\sin\varphi-B^2\cos\varphi
},\label{rho2}
}
\eq{
\frac{\p\theta_2}{\p t}=&-
{V_{q2}-\frac{1}{2m}\lr{\nabla\theta_2-q\bol{A}}^2
-q\Phi
}+\frac{q\hbar}{2m}\lrs{-
B^3+\sqrt{\frac{\rho_1}{\rho_2}}\lr{B^1\cos\varphi
+
B^2\sin\varphi}
}.\label{theta2}
}
Combining equations \eqref{rho1} and \eqref{rho2}, 
and recalling equations  \eqref{rho} and \eqref{u}, 
one obtains the continuity equation for the density $\rho$, 
\eq{
\frac{\p\rho}{\p t}=-\frac{1}{m}\nabla\cdot\lr{\rho_1\nabla\theta_1+\rho_2\nabla\theta_2-q\rho\bol{A}}=-\nabla\cdot\lr{\rho\bol{u}}.\label{continuity}
}

\medskip
\noindent (ii) \underline{\ti{Derivation of the momentum equation.}} 
To simplify the analysis leading to the momentum equation, it is convenient to introduce several quantities. 
In terms of $\eta$, the spinor field $\Psi$ takes the form
\begin{equation}\label{spinor2}
\Psi=\begin{pmatrix}\Psi_1\\\Psi_2\end{pmatrix}
=\sqrt{\rho}\begin{pmatrix} \cos\lr{\frac{\eta}{2}}\exp\lrc{{\rm i}\frac{\theta_1}{\hbar}}\\\sin\lr{\frac{\eta}{2}}\exp\lrc{{\rm i}\frac{\theta_2}{\hbar}}\end{pmatrix}.
\end{equation} 
We introduce the kinetic  momenta 
\eq{
\bol{p}_i=\nabla\theta_i-q\bol{A},\qquad i=1,2.\label{pi}
}
We have
\eq{
m\rho\bol{u}=\rho_1\bol{p}_1+\rho_2\bol{p}_2,\qquad\bol{p}_2-\bol{p}_1=\hbar\nabla\varphi,\qquad \nabla\cp\bol{p}_i=-q\bol{B},\qquad i=1,2.\label{p1p2prop}
}
It is also convenient to introduce the magnetic potential energies,
\sys{
M_1=&-\frac{q\hbar}{2m}\lrs{
B^3+\sqrt{\frac{\rho_2}{\rho_1}}\lr{B^1\cos\varphi
+
B^2\sin\varphi
}},\label{M1}\\
M_2=&-\frac{q\hbar}{2m}\lrs{-
B^3+\sqrt{\frac{\rho_1}{\rho_2}}\lr{B^1\cos\varphi
+
B^2\sin\varphi}
},\label{M2}
}{M12}
and notice that the last terms of equations \eqref{rho1} and \eqref{rho2} can be expressed as 
\eq{
\frac{q\sqrt{\rho_1\rho_2}}{m}\lr{B^1\sin\varphi-B^2\cos\varphi}=\frac{q\rho}{2m}\lr{B^1s^2-B^2s^1}=\frac{q\rho}{2m}\bol{B}\cp\bol{s}\cdot\nabla z=\frac{\rho\omega_z}{2},
}
where $\omega_z=q\bol{B}\cp\bol{s}\cdot\nabla z/m$ has dimensions of frequency. 
We thus have
\eq{\label{mom}
m\rho\frac{\p\bol{u}}{\p t}=&-
\frac{\p\rho}{\p t}\lr{m\bol{u}+q\bol{A}}+
{\frac{\p\rho_1}{\p t}\nabla\theta_1+\rho_1\nabla\frac{\p\theta_1}{\p t}+\frac{\p\rho_2}{\p t}\nabla\theta_2+\rho_2\nabla\frac{\p\theta_2}{\p t}}-q\rho\frac{\p\bol{A}}{\p t}\\
=&
\frac{\hbar}{\rho}\lr{\rho_1\frac{\p\rho_2}{\p t}-\rho_2\frac{\p\rho_1}{\p t}}\nabla\varphi 
-\rho_1\nabla\lr{V_{q1}+\frac{\bol{p}_1^2}{2m}+M_1}
-\rho_2\nabla\lr{V_{q2}+\frac{\bol{p}_2^2}{2m}+M_2}-q\rho\nabla\Phi
-q\rho\frac{\p\bol{A}}{\p t}\\
=&
\frac{\hbar}{m\rho}\lrs{\rho_2\nabla\cdot\lr{\rho_1\bol{p}_1}-
\rho_1\nabla\cdot\lr{\rho_2\bol{p}_2}
}\nabla\varphi
+\frac{\hbar \rho\omega_z}{2}\nabla\varphi-\frac{\rho_1}{2m}\nabla\bol{p}_1^2
-\frac{\rho_2}{2m}\nabla\bol{p}_2^2\\
&-\rho_1\nabla\lr{V_{q1}+M_1}-\rho_2\nabla\lr{V_{q2}+M_2}+q\rho\bol{E}.
}
Next, we rewrite the $\nabla \bol{p}_i^{\,2}$ terms in divergence form. Let
\begin{equation}
D_i=\nabla\cdot(\rho_i\bol{p}_i),\qquad i=1,2.
\end{equation}
Using
\begin{equation}
\nabla\cdot(\rho_i \bol{p}_i\otimes \bol{p}_i)
=
D_i \bol{p}_i+\rho_i(\bol{p}_i\cdot\nabla)\bol{p}_i
\end{equation}
together with
\begin{equation}
(\bol{p}_i\cdot\nabla)\bol{p}_i
=
\nabla\!\lr{\frac{\bol{p}_i^{\,2}}{2}}
-
\bol{p}_i\times(\nabla\times\bol{p}_i),
\end{equation}
we obtain
\begin{equation}
-\frac{\rho_i}{2m}\nabla \bol{p}_i^{\,2}
=
-\frac{1}{m}\nabla\cdot(\rho_i\bol{p}_i\otimes\bol{p}_i)
+\frac{1}{m}D_i\bol{p}_i
-\frac{\rho_i}{m}\bol{p}_i\times(\nabla\times\bol{p}_i).
\end{equation}
Since $\nabla\times\bol{p}_i=-q\bol{B}$, this becomes
\begin{equation}
-\frac{\rho_i}{2m}\nabla \bol{p}_i^{\,2}
=
-\frac{1}{m}\nabla\cdot(\rho_i\bol{p}_i\otimes\bol{p}_i)
+\frac{1}{m}D_i\bol{p}_i
+\frac{q}{m}\rho_i \bol{p}_i\times\bol{B}.
\end{equation}
Summing over $i=1,2$ and substituting into \eqref{mom}, we find
\eq{
m\rho \frac{\partial \bol{u}}{\partial t}
=&
-\frac{1}{m}\nabla\cdot\lr{\rho_1\bol{p}_1\otimes\bol{p}_1+\rho_2\bol{p}_2\otimes\bol{p}_2}
+\frac{1}{m}(D_1\bol{p}_1+D_2\bol{p}_2)
+\frac{\hbar}{m\rho}(\rho_2D_1-\rho_1D_2)\nabla\varphi\\
&+\frac{q}{m}(\rho_1\bol{p}_1+\rho_2\bol{p}_2)\times\bol{B}
+\frac{\hbar\rho\omega_z}{2}\nabla\varphi
-\rho_1\nabla(V_{q1}+M_1)-\rho_2\nabla(V_{q2}+M_2)+q\rho \bol{E}\\
=&
-\frac{1}{m}\nabla\cdot\lr{\rho_1\bol{p}_1\otimes\bol{p}_1+\rho_2\bol{p}_2\otimes\bol{p}_2}
+\frac{1}{m}(D_1\bol{p}_1+D_2\bol{p}_2)
+\frac{\hbar}{m\rho}(\rho_2D_1-\rho_1D_2)\nabla\varphi\\
&
+\frac{\hbar\rho\omega_z}{2}\nabla\varphi
-\rho_1\nabla(V_{q1}+M_1)-\rho_2\nabla(V_{q2}+M_2)+q\rho\lr{\bol{E}+\bol{u}\cp\bol{B}}.
\label{eq:momentum-1}
}
We now simplify the terms containing $D_1$ and $D_2$. Using~\eqref{p1p2prop} 
we have
\begin{align}
&\frac{1}{m}(D_1\bol{p}_1+D_2\bol{p}_2)
+\frac{\hbar}{m\rho}(\rho_2D_1-\rho_1D_2)\nabla\varphi
\nonumber\\
&\qquad=
\frac{1}{m}
\left[
D_1\bol{p}_1+D_2\bol{p}_2
+\frac{\rho_2D_1-\rho_1D_2}{\rho}(\bol{p}_2-\bol{p}_1)
\right]
\nonumber\\
&\qquad=
\frac{D_1+D_2}{m\rho}(\rho_1\bol{p}_1+\rho_2\bol{p}_2)
=
(D_1+D_2)\bol{u}.
\end{align}
By the continuity equation \eqref{continuity},
\begin{equation}
D_1+D_2=-m\frac{\partial \rho}{\partial t},
\end{equation}
and therefore
\begin{equation}
\frac{1}{m}(D_1\bol{p}_1+D_2\bol{p}_2)
+\frac{\hbar}{m\rho}(\rho_2D_1-\rho_1D_2)\nabla\varphi
=
-m\frac{\partial \rho}{\partial t}\,\bol{u}.
\end{equation}
Hence \eqref{eq:momentum-1} becomes
\begin{equation}
\frac{\partial}{\partial t}(m\rho \bol{u})
=
-\frac{1}{m}\nabla\cdot\lr{\rho_1\bol{p}_1\otimes\bol{p}_1+\rho_2\bol{p}_2\otimes\bol{p}_2}
+q\rho(\bol{E}+\bol{u}\times\bol{B})
+\frac{\hbar\rho\omega_z}{2}\nabla\varphi
-\rho_1\nabla(V_{q1}+M_1)-\rho_2\nabla(V_{q2}+M_2).
\label{eq:momentum-conservative}
\end{equation}
We next rewrite the flux term. Writing
\begin{equation}
\alpha=\frac{\rho_1}{\rho},
\qquad
\beta=\frac{\rho_2}{\rho},
\qquad
\alpha+\beta=1,
\end{equation}
we have
\begin{equation}
\bol{p}_1=m\bol{u}- \hbar \beta\nabla\varphi,
\qquad
\bol{p}_2=m\bol{u}+ \hbar \alpha\nabla\varphi.\label{p1p2uphi}
\end{equation}
A direct computation gives
\begin{equation}
\rho_1\bol{p}_1\otimes\bol{p}_1+\rho_2\bol{p}_2\otimes\bol{p}_2
=
m^2\rho\,\bol{u}\otimes\bol{u}
+
\rho\,\alpha\beta\,\hbar^2\nabla\varphi\otimes\nabla\varphi.
\end{equation}
Since
\begin{equation}
\alpha\beta=\frac{\rho_1\rho_2}{\rho^2}=\frac14\sin^2\eta,
\end{equation}
it follows that
\begin{equation}
\rho_1\bol{p}_1\otimes\bol{p}_1+\rho_2\bol{p}_2\otimes\bol{p}_2
=
m^2\rho\,\bol{u}\otimes\bol{u}
+
\frac{\rho\hbar^2}{4}\sin^2\eta\,\nabla\varphi\otimes\nabla\varphi.
\label{eq:flux-split}
\end{equation}
We now simplify the potential terms. First, from the definitions \eqref{M1} and \eqref{M2} of $M_1$ and $M_2$,
\begin{equation}
\rho_1M_1+\rho_2M_2
=
-\frac{q\hbar}{2m}\,\rho\,\bol{B}\cdot\bol{s}.
\label{eq:M-combined}
\end{equation}
Second, we notice the following identity, 
\begin{equation}
\rho_1V_{q1}+\rho_2V_{q2}
=
-\frac{\hbar^2}{2m}\rho\frac{\Delta\sqrt{\rho}}{\sqrt{\rho}}
+
\frac{\hbar^2}{8m}\rho \abs{\nabla\eta}^2.
\label{eq:Vq-combined}
\end{equation}
We introduce the quantum (Bohm) potential
\begin{equation}
Q=-\frac{\hbar^2}{2m}\frac{\Delta\sqrt{\rho}}{\sqrt{\rho}},\label{Q}
\end{equation}
and express equation \eqref{eq:momentum-conservative} as follows:
\eq{
\frac{\p}{\p t}\lr{m\rho\bol{u}}=&-m\nabla\cdot\lr{\rho\bol{u}\otimes\bol{u}}+q\rho\lr{\bol{E}+\bol{u}\cp\bol{B}}-\frac{\hbar^2}{4m}\nabla\cdot\lr{\rho\sin^2\eta\nabla\varphi\otimes\nabla\varphi}\\
&+\frac{\hbar\rho\omega_z}{2}\nabla\varphi+\nabla\lr{\frac{q\hbar}{2m}\rho\bol{B}\cdot\bol{s}-\rho Q-\frac{\hbar^2}{8m}\rho\abs{\nabla\eta}^2}+\lr{V_{q_1}+M_1}\nabla\rho_1+\lr{V_{q_2}+M_2}\nabla\rho_2\\
=&-m\nabla\cdot\lr{\rho\bol{u}\otimes\bol{u}}+q\rho\lr{\bol{E}+\bol{u}\cp\bol{B}}-\rho\nabla Q+\frac{q\hbar}{2m}\rho
s_k\nabla B^k
\\
&+\frac{q\hbar}{2m}\rho B^k\nabla s_k-\frac{\hbar^2}{4m}\nabla\cdot\lr{\rho\sin^2\eta\nabla\varphi\otimes\nabla\varphi}
+\frac{\hbar\rho\omega_z}{2}\nabla\varphi
+\lr{\frac{q\hbar}{2m}\bol{B}\cdot\bol{s}-Q}\nabla\rho\\&-\frac{\hbar^2}{8m}\nabla\lr{\rho\abs{\nabla\eta}^2}
+\lr{V_{q_1}+M_1}\nabla\rho_1+\lr{V_{q_2}+M_2}\nabla\rho_2.\label{eq:momentum_1}
}
The remaining task is to express the last two lines of this equation as the divergence of  the spin-stress tensor
\begin{equation}
\Pi^{ij}
=
\frac{\hbar^2\rho}{4m}\,\partial_i\bol{s}\cdot\partial_j\bol{s}
=
\frac{\hbar^2\rho}{4m}
\lr{
\partial_i\eta\,\partial_j\eta
+
\sin^2\eta\,\partial_i\varphi\,\partial_j\varphi
}.
\label{eq:Pi-spin}
\end{equation}
We have
\eq{
\nabla\cdot\Pi=\frac{\hbar^2}{4m}\lrs{\nabla\cdot\lr{\rho\nabla\eta}\nabla\eta
+\frac{\rho}{2}\nabla\abs{\nabla\eta}^2
+\nabla\cdot\lr{\rho\sin^2\eta\nabla\varphi}\nabla\varphi
+\frac{\rho\sin^2\eta}{2}\nabla\abs{\nabla\varphi}^2
}.\label{divPi}
}
Substituting~\eqref{divPi} 
into~\eqref{eq:momentum_1}, 
denoting $d/dt=\p/\p t+\bol{u}\cdot\nabla$, 
and using the continuity equation~\eqref{continuity}, we obtain
\eq{
m\rho\frac{d\bol{u}}{dt}=&
q\rho\lr{\bol{E}+\bol{u}\cp\bol{B}}-\rho\nabla Q+\frac{q\hbar}{2m}\rho s_k\nabla B^k-\nabla\cdot\Pi\\
&+\frac{q\hbar}{2m}\rho B^k\nabla s_k
+\frac{\hbar\rho\omega_z}{2}\nabla\varphi
+\lr{\frac{q\hbar}{2m}\bol{B}\cdot\bol{s}-Q-\frac{\hbar^2}{8m}\abs{\nabla\eta}^2}\nabla\rho\\&+\lr{V_{q_1}+M_1}\nabla\rho_1+\lr{V_{q_2}+M_2}\nabla\rho_2
+\frac{\hbar^2}{4m}\nabla\cdot\lr{\rho\nabla\eta}\nabla\eta.\label{eq:momentum_2}
}
Let us show that the last two lines vanish. We proceed by eliminating the components of this expression along $\nabla\rho$, $\nabla\varphi$, and $\nabla\eta$ one by one. 
Using~\eqref{eta}, 
we obtain
\eq{
\nabla \rho_1
=
\alpha \nabla \rho
-\frac{\rho}{2}\sin\eta\,\nabla\eta,
\qquad
\nabla \rho_2
=
\beta \nabla \rho
+\frac{\rho}{2}\sin\eta\,\nabla\eta .
}
Therefore
\begin{align}
&(V_{q1}+M_1)\nabla \rho_1+(V_{q2}+M_2)\nabla \rho_2 \notag\\
&\qquad=
\bigl[\alpha (V_{q1}+M_1)+\beta (V_{q2}+M_2)\bigr]\nabla\rho
+\frac{\rho}{2}\sin\eta
\lr{V_{q2}+M_2-V_{q1}-M_1}\nabla\eta .
\label{eq:split-rho1-rho2}
\end{align}
Using \eqref{eq:M-combined} and \eqref{eq:Vq-combined}, it follows that
\begin{align}
&(V_{q1}+M_1)\nabla \rho_1+(V_{q2}+M_2)\nabla \rho_2 \notag\\
&\qquad=
\left(
Q+\frac{\hbar^2}{8m}|\nabla\eta|^2
-\frac{q\hbar}{2m} \bol{B}\cdot \bol{s}
\right)\nabla\rho
+\frac{\rho}{2}\sin\eta
\lr{V_{q2}-V_{q1}+M_2-M_1}\nabla\eta .
\label{eq:split-rho1-rho2-2}
\end{align}
Substituting this into \eqref{eq:momentum_2}, the terms proportional to $\nabla\rho$ cancel, 
\eq{
m\rho\frac{d\bol{u}}{dt}=&
q\rho\lr{\bol{E}+\bol{u}\cp\bol{B}}-\rho\nabla Q+\frac{q\hbar}{2m}\rho\, s_k\nabla B^k-\nabla\cdot\Pi\\
&+\frac{q\hbar}{2m}\rho B^k\nabla s_k
+\frac{\hbar\rho\omega_z}{2}\nabla\varphi
+\frac{\rho}{2}\sin\eta\lr{V_{q2}-V_{q1}+M_2-M_1}\nabla\eta
+\frac{\hbar^2}{4m}\nabla\cdot\lr{\rho\nabla\eta}\nabla\eta.
\label{eq:momentum_3}
}

\noindent For the $\varphi$-part, we first note that
\eq{
\frac{q\hbar}{2m}\rho B^k\nabla s_k=
\frac{q\hbar}{2m}\rho\lrc{
\sin\eta\lr{B^2\cos\varphi-B^1\sin\varphi}\nabla\varphi
+\lrs{
\cos\eta\lr{B^1\cos\varphi+B^2\sin\varphi}
-B^3\sin\eta
}\nabla\eta
},
}
and that
\eq{
\frac{\hbar\rho\omega_z}{2}\nabla\varphi=\frac{q\hbar}{2m}\rho\sin\eta\lr{B^1\sin\varphi-B^2\cos\varphi}\nabla\varphi.
}
The terms along $\nabla\varphi$ in the last line of \eqref{eq:momentum_3} therefore vanish, and we arrive at
\eq{
m\rho\frac{d\bol{u}}{dt}=&
q\rho\lr{\bol{E}+\bol{u}\cp\bol{B}}-\rho\nabla Q+\frac{q\hbar}{2m}\rho\, s_k\nabla B^k-\nabla\cdot\Pi\\
&+\frac{q\hbar}{2m}\rho\lrs{
\cos\eta\lr{B^1\cos\varphi+B^2\sin\varphi}
-B^3\sin\eta
}
\nabla\eta
+\frac{\rho}{2}\sin\eta\lr{V_{q2}-V_{q1}+M_2-M_1}\nabla\eta\\&+\frac{\hbar^2}{4m}\nabla\cdot\lr{\rho\nabla\eta}\nabla\eta.
\label{eq:momentum_4}
}

\noindent For the $\eta$-part, we begin by noting that
\eq{
\frac{\rho}{2}\sin\eta\lr{M_2-M_1}=\frac{q\hbar}{2m}\rho\lrs{B^3\sin\eta-\cos\eta\lr{B^1\cos\varphi+B^2\sin\varphi}}.
}
The only remaining terms to be handled in the last two lines of eq.   \eqref{eq:momentum_4} 
are therefore
\eq{
\frac{\rho}{2}\sin\eta\lr{V_{q2}-V_{q1}}+\frac{\hbar^2}{4m}\nabla\cdot\lr{\rho\nabla\eta}=0,
}
which vanishes due to  the key identity for the difference of the two quantum potentials. 
Noting that $\nabla\alpha=-\nabla\beta=-\sin\eta\nabla\eta/2$, we have
\eq{
V_{q2}-V_{q1}
=&\frac{\hbar^2}{4m}\lr{\frac{\nabla\alpha\cdot\nabla\rho}{\alpha\rho}+\frac{\Delta\alpha}{\alpha}-\frac{\abs{\nabla\alpha}^2}{2\alpha^2}
-\frac{\nabla\beta\cdot\nabla\rho}{\beta\rho}-\frac{\Delta\beta}{\beta}+\frac{\abs{\nabla\beta}^2}{2\beta^2}
}\\
=&\frac{\hbar^2}{4m}
\lrs{
\lr{\frac{1}{\alpha}+\frac{1}{\beta}}\frac{\nabla\rho\cdot\nabla\alpha}{\rho}
+\lr{\frac{1}{\alpha}+\frac{1}{\beta}}\Delta\alpha
+\lr{\frac{1}{\beta^2}-\frac{1}{\alpha^2}}\frac{\abs{\nabla\alpha}^2}{2}}\\
=&\frac{\hbar^2}{4m\alpha\beta\rho}\lrs{
\nabla\rho\cdot\nabla\alpha+\rho\Delta\alpha+\frac{\alpha-\beta}{\alpha\beta}\rho\frac{\abs{\nabla\alpha}^2}{2}
}
\\=&
-\frac{\hbar^2}{2m}\,
\frac{\nabla\cdot(\rho\nabla\eta)}{\rho\sin\eta}.
\label{eq:Vq-difference}
} 
Equation \eqref{eq:momentum_4} therefore becomes 
\begin{equation}
m\rho\frac{d\bol{u}}{dt}
=
q\rho(\bol{E}+\bol{u}\times\bol{B})
+\frac{\hbar^2}{2m}\rho\,\nabla\!\lr{\frac{\Delta\sqrt{\rho}}{\sqrt{\rho}}}
+\frac{q\hbar}{2m}\rho s_k\nabla{B}^k
-\nabla\cdot\Pi.
\label{eq:Takabayasi-momentum}
\end{equation}
This is the desired momentum equation.

\medskip 
\noindent (iii) \underline{\ti{Derivation of the spin evolution equation.}} 
We now derive the evolution equation for the spin field $\bol{s}$. 
Since
$\bol{s}
=
(\sin\eta\cos\varphi,\sin\eta\sin\varphi,\cos\eta)$, 
it is enough to derive evolution equations for $\eta$ and $\varphi$.
We begin with the equation for $\eta$.  
We have
\begin{equation}
\frac{\partial \rho_1}{\partial t}
=
\alpha \frac{\partial \rho}{\partial t}
-
\frac{\rho}{2}\sin\eta\,\frac{\partial \eta}{\partial t},
\qquad
\frac{\partial \rho_2}{\partial t}
=
\beta \frac{\partial \rho}{\partial t}
+
\frac{\rho}{2}\sin\eta\,\frac{\partial \eta}{\partial t}.
\end{equation} 
Hence, 
\eq{
\alpha\frac{\p\rho_2}{\p t}-\beta\frac{\p\rho_1}{\p t}=\frac{\rho}{2}\sin\eta\frac{\p\eta}{\p t}.
}
Recalling \eqref{rho1} and \eqref{rho2}, 
we obtain 
\begin{equation}
\frac{\rho}{2}\sin\eta\,\frac{\partial \eta}{\partial t}
=
\frac{1}{m}\lr{\beta \nabla\cdot(\rho_1\bol{p}_1)-\alpha \nabla\cdot(\rho_2\bol{p}_2)}
+\frac{q\sqrt{\rho_1\rho_2}}{m}\lr{B^1\sin\varphi-B^2\cos\varphi}.
\label{eq:eta-start}
\end{equation}
Now, from \eqref{p1p2uphi}, we may write 
\begin{equation}
\rho_1\bol{p}_1
=
m\rho_1\bol{u}
-
\rho\alpha\beta\hbar\nabla\varphi,
\qquad
\rho_2\bol{p}_2
=
m\rho_2\bol{u}
+
\rho\alpha\beta\hbar\nabla\varphi,
\end{equation}
so
\begin{align}
\beta \nabla\cdot(\rho_1\bol{p}_1)-\alpha \nabla\cdot(\rho_2\bol{p}_2)
&=
m\lrs{\beta \nabla\cdot(\rho_1\bol{u})-\alpha \nabla\cdot(\rho_2\bol{u})}
-\hbar \nabla\cdot(\rho\alpha\beta\nabla\varphi).
\end{align}
Since $\alpha+\beta=1$ and $\beta\nabla\alpha-\alpha\nabla\beta=\nabla\alpha$, one finds
\begin{equation}
\beta \nabla\cdot(\rho_1\bol{u})-\alpha \nabla\cdot(\rho_2\bol{u})
=
\bol{u}\cdot(\beta\nabla\rho_1-\alpha\nabla\rho_2)
=
-\frac{\rho}{2}\sin\eta\,\bol{u}\cdot\nabla\eta.
\end{equation}
Substituting into \eqref{eq:eta-start}, we arrive at
\begin{equation}
\frac{\partial \eta}{\partial t}
+\bol{u}\cdot\nabla\eta
=
-\frac{\hbar}{2m\rho\sin\eta}\nabla\cdot\!\lr{\rho\sin^2\eta\,\nabla\varphi}
+\frac{q}{m}\lr{B^1\sin\varphi-B^2\cos\varphi}.
\label{eq:eta-evolution}
\end{equation}

\noindent We next derive the equation for $\varphi$. Subtracting \eqref{theta1} from \eqref{theta2}, 
we get
\begin{equation}
\hbar\frac{\partial \varphi}{\partial t}
=
-(V_{q2}-V_{q1})
-\frac{1}{2m}\lr{\bol{p}_2^{\,2}-\bol{p}_1^{\,2}}
+\frac{q\hbar}{m}
\left[
-B^3+\cot\eta\lr{B^1\cos\varphi+B^2\sin\varphi}
\right].
\label{eq:phi-start}
\end{equation}
Using \eqref{p1p2uphi}
we compute
\begin{equation}
\frac{\bol{p}_2^{\,2}-\bol{p}_1^{\,2}}{2m}
=
\hbar\,\bol{u}\cdot\nabla\varphi
+\frac{\hbar^2}{2m}\cos\eta\,\abs{\nabla\varphi}^2.
\label{eq:p-difference}
\end{equation}
Substituting \eqref{eq:p-difference} and \eqref{eq:Vq-difference} into \eqref{eq:phi-start}, we obtain
\begin{equation}
\frac{\partial \varphi}{\partial t}
+\bol{u}\cdot\nabla\varphi
=
\frac{\hbar}{2m\rho\sin\eta}\nabla\cdot(\rho\nabla\eta)
-\frac{\hbar}{2m}\cos\eta\,\abs{\nabla\varphi}^2
+\frac{q}{m}
\left[
-B^3+\cot\eta\lr{B^1\cos\varphi+B^2\sin\varphi}
\right].
\label{eq:phi-evolution}
\end{equation}
We now combine \eqref{eq:eta-evolution} and \eqref{eq:phi-evolution} into an equation for $\bol{s}$. Introduce the tangent orthonormal frame $\lr{\bol{s},\bol
s_{\eta},\bol{s}_{\varphi}}$ where 
\begin{equation}
\bol{s}_\eta=\frac{\partial \bol{s}}{\partial \eta}
=
(\cos\eta\cos\varphi,\cos\eta\sin\varphi,-\sin\eta),
\qquad
\bol{s}_\varphi=\frac{1}{\sin\eta}\frac{\partial \bol{s}}{\partial \varphi}
=
(-\sin\varphi,\cos\varphi,0),
\end{equation}
so that
\begin{equation}
\bol{s}\times\bol{s}_\eta=\bol{s}_\varphi,
\qquad
\bol{s}\times\bol{s}_\varphi=-\bol{s}_\eta.
\end{equation}
Hence
\eq{
\frac{d\bol{s}}{dt}=\frac{d\eta}{dt}\bol{s}_{\eta}+\sin\eta\frac{d\varphi}{dt}\bol{s}_{\varphi}. \label{st}
}

\noindent On the other hand, decomposing $\bol{B}=\lr{\bol{B}\cdot\bol{s}}\bol{s}+\lr{\bol{B}\cdot\bol{s}_{\eta}}\bol{s}_{\eta}+\lr{\bol{B}\cdot\bol{s}_{\varphi}}\bol{s}_{\varphi}$ in the frame $(\bol{s},\bol{s}_\eta,\bol{s}_\varphi)$, we have
\begin{equation}
\bol{s}\times\bol{B}
=
\lr{B^1\sin\varphi-B^2\cos\varphi}\bol{s}_\eta
+
\lrs{\cos\eta(B^1\cos\varphi+B^2\sin\varphi)-B^3\sin\eta}\bol{s}_\varphi.
\label{eq:s-cross-B}
\end{equation}
Also, from
\begin{equation}
\partial_i \bol{s}
=
(\partial_i\eta)\,\bol{s}_\eta
+
\sin\eta\,(\partial_i\varphi)\,\bol{s}_\varphi,\qquad \lr{\cos\varphi,\sin\varphi,0}=\sin\eta\bol{s}+\cos\eta\bol{s}_{\eta}, 
\end{equation}
one computes
\begin{align}
\frac{1}{\rho}\partial_i(\rho\,\partial_i\bol{s})
&=
\left[
\frac{1}{\rho}\nabla\cdot(\rho\nabla\eta)
-\sin\eta\cos\eta\,\abs{\nabla\varphi}^2
\right]\bol{s}_\eta
+
\left[
\frac{1}{\rho\sin\eta}\nabla\cdot\!\lr{\rho\sin^2\eta\,\nabla\varphi}
\right]\bol{s}_\varphi
-\lr{\abs{\nabla\eta}^2+\sin^2\eta\abs{\nabla\varphi}^2} \bol{s}. 
\end{align}
Taking the cross product with $\bol{s}$ 
we obtain
\eq{
\label{eq:s-cross-s}
\frac{\hbar}{2m\rho}\,\bol{s}\times\partial_i(\rho\,\partial_i\bol{s})
=
-\frac{\hbar}{2m\rho\sin\eta}\nabla\cdot\!\lr{\rho\sin^2\eta\,\nabla\varphi}\,\bol{s}_\eta
+
\left[
\frac{\hbar}{2m\rho}\nabla\cdot(\rho\nabla\eta)
-\frac{\hbar}{2m}\sin\eta\cos\eta\,\abs{\nabla\varphi}^2
\right]\bol{s}_\varphi.
}
Substituting \eqref{eq:eta-evolution} and \eqref{eq:phi-evolution} into \eqref{st}, and using   
\eqref{eq:s-cross-B}  and \eqref{eq:s-cross-s}, we conclude that
\eq{
\frac{\partial \bol{s}}{\partial t}
+
(\bol{u}\cdot\nabla)\bol{s}
=
\frac{q}{m}\,\bol{s}\times\bol{B}
+
\frac{\hbar}{2m\rho}\,\bol{s}\times\partial_i(\rho\,\partial_i\bol{s}).
\label{eq:spin-evolution}
}
Finally, with the identity $\p_i\lr{\rho\p_i\bol{s}}=\lr{\nabla\rho\cdot\nabla}\bol{s}+\rho\Delta\bol{s}$, one arrives at the desired evolution equation for the spin field \eqref{eq:hydro-spin}.


\section{Derivation of the Pauli equation for a fluid with spin} \label{sec:PauliDer}
The purpose of this section is to show that a charged fluid endowed with an internal spin degree of freedom naturally satisfies the Pauli equation in its hydrodynamic (Madelung) form, namely system \eqref{eq:hydro-continuity}, \eqref{eq:hydro-momentum}, and \eqref{eq:hydro-spin}.
This derivation generalizes the construction developed in \cite{Sato25QF}, where it was shown that a charged fluid with a constant internal spin distribution \(\bol{s}=\pm\nabla z\) satisfies the Schr\"odinger equation in Madelung form.
Most importantly, the result of the present section shows that the Pauli equation emerges solely from fluid-mechanical and electromagnetic principles, rather than being introduced as an ansatz of the theory.

In a smooth bounded domain \(\Omega\subset\mathbb{R}^3\), we consider a fluid of total charge \(q\), total mass $m$, and with density \(\rho\lr{\bol{x},t}\), normalized so that
\eq{
\int_{\Omega}\rho\,dV=1,
}
where \(dV\) denotes the Euclidean volume element. 
The flow velocity $\bol{u}\lr{\bol{x},t}
=
\bol{u}_T\lr{\bol{x},t}
+
\bol{v}_s\lr{\bol{x},t}$ 
is decomposed into a slow component $\bol{u}_T\lr{\bol{x},t}$, evolving on the time scale $T$, and a fast component $\bol{v}_s\lr{\bol{x},t}$, which characterizes the spin dynamics on a time scale $\tau_s\ll T$. 
By dimensional considerations, for a spin-$\frac12$ particle such as the electron, since $\abs{\bol{v}_s}< c$ 
one expects 
a lower bound 
\eq{
\tau_s > \frac{\hbar}{mc^2}\sim 10^{-21}\,{\rm s}, 
}
 where $c$ is the speed of light. 

Because the fluid is charged, both \(\bol{u}_T=\bol{u}-\bol{v}_s\) and \(\bol{v}_s\) generally generate electromagnetic fields.
The magnetic field induced by \(\bol{u}_T\) is the self-induced macroscopic magnetic field \(\bol{\mc{B}}\lr{\bol{x},t}\), which combines with the external magnetic field \(\bol{B}_{\rm ext}\lr{\bol{x},t}\) to yield the total magnetic field
\(
\bol{B}=\bol{\mc{B}}+\bol{B}_{\rm ext},
\)
playing a role analogous to that of the magnetic field in magnetohydrodynamics (MHD).

If the spin motion \(\bol{v}_s\) is neglected, one recovers the equations of an electromagnetic fluid, namely the Euler--Maxwell system, formulated in terms of \(\lr{\rho,\bol{u},\bol{E},\bol{B}}\): the continuity and momentum equations for the hydrodynamic fields \(\lr{\rho,\bol{u}}\) are coupled to the relevant Maxwell equations for the electromagnetic fields \(\lr{\bol{E},\bol{B}}\), together with an appropriate equation of state for the mechanical pressure.

In the following, we shall assume that the external electromagnetic fields $\lr{\bol{E}_{\rm ext},\bol{B}_{\rm ext}}$ dominate the self-induced ones $\lr{\bol{\mc{E}},\bol{\mc{B}}}$, i.e. $\bol{E}_{\rm ext}\gg\bol{\mc{E}}$ and $\bol{B}_{\rm ext}\gg\bol{\mc{B}}$.  
This enables us to 
simplify the phase space $\lr{\rho,\bol{u},\bol{s},\bol{E},\bol{B}}$ of the system to the triplet $\lr{\rho,\bol{u},\bol{s}}$, and to 
focus on the fluid dynamics  
caused by the spin motion $\bol{v}_s$.  Indeed, 
our interest 
lies in the regime in which \(\bol{v}_s\) cannot be neglected.
In this regime, the magnetic field \(\bol{B}_s\lr{\bol{x},t}\) generated by \(\bol{v}_s\) is governed by the Amp\`ere--Maxwell equation
\eq{
\nabla\cp\bol{B}_s
=
\mu_0 q\rho \bol{v}_s
+
\frac{1}{c^2}\frac{\p\bol{E}}{\p t},
}
where \(\mu_0\) denotes the vacuum permeability and \(c\) the speed of light.
Restricting attention to a non-relativistic regime, we neglect the displacement current; formally, we adopt the limit \(\epsilon_0=1/\lr{\mu_0 c^2}\rightarrow0\).
Equivalently, omitting the term \(c^{-2}\p_t\bol{E}\) amounts to adopting the magnetoquasistatic approximation used in the derivation of MHD, in which the magnetic field is generated predominantly by the material current \(q\rho\bol{v}_s\), while electromagnetic-wave propagation and retardation effects are neglected (see, e.g., \cite{Freidberg2014IdealMHD}).
Then, whenever \(\rho\neq 0\), we may write
\eq{
\bol{v}_s=\frac{\nabla\cp\bol{B}_s}{\mu_0q\rho},\qquad\nabla\cdot\lr{\rho\bol{v}_s}=0.\label{vs}
}
Let
\eq{
\rho\bol{S}=c_s\hbar\rho\bol{s},
}
denote the spin (intrinsic angular momentum) density, where \(c_s\) is a positive dimensionless real constant, and \(\bol{S}\lr{\bol{x},t}=c_s\hbar\bol{s}\lr{\bol{x},t}\) is the spin field.
In the spin-\(\frac12\) particle analogy, \(c_s=1/2\).
The misalignment between \(\bol{s}\) and the magnetic field \(\bol{B}_s\) produces a torque on \(\bol{s}\), and thus determines the evolution of the spin field on the time scale \(\tau_s\) as
\eq{
\rho\lr{\frac{\p}{\p t}+{\bol{v}_s}\cdot\nabla}\bol{s}=\kappa_s\rho\bol{s}\times\frac{q\bol{B}_s}{m},\label{s0}
}
where \(\kappa_s\) is a dimensionless physical constant.
By the assumed scale separation, on the macroscopic time scale \(T\gg\tau_s\), the fast spin dynamics rapidly approaches a local equilibrium self-field \(\bol{B}_s^e\), whose torque vanishes:
\eq{
\bol{s}\times\bol{B}_s^e=\bol{0}.
}
Specifically, it is natural to expect the local equilibrium magnetization to be proportional to the local spin density. We therefore introduce a local constitutive approximation in which \(\bol{B}_s^e\) is proportional to \(\rho\bol{s}\):
\eq{
\bol{B}_s^e=c_{g}\frac{q\mu_0 \hbar}{2m}\rho\bol{s},\qquad c_{g}=g c_s,\label{Bs}
}
where \(g\) is a dimensionless physical constant whose magnitude is approximately \(2\) for spin-\(\frac12\) particles such as the electron. In that case \(c_g=1\). 
We then decompose the self-induced field as
\eq{
\bol{B}_s=\bol{B}_s^e+\delta\bol{B},
}
where \(\delta\bol{B}\) represents the deviation from the local equilibrium \eqref{Bs} generated by the fast motion \(\bol{v}_s\).
In the quasistationary fast-scale regime, equation \eqref{s0} therefore reduces to
\eq{
\rho\lr{\bol{v}_s\cdot\nabla}\bol{s}=\kappa_s\rho\bol{s}\times\frac{q\delta\bol{B}}{m}.\label{s_}
}
On the other hand, on the longer time scale \(T\), the evolution of the spin field \(\bol{s}\) is governed by
\eq{
\rho\lr{\frac{\p}{\p t}+\bol{u}\cdot\nabla}\bol{s}
=\kappa_s\rho\bol{s}\times\frac{q}{m}\lr{\bol{B}+\bol{B}_s^e+\delta\bol{B}}
=\kappa_s\rho\bol{s}\times\frac{q\bol{B}}{m}+\rho\lr{\bol{v}_s\cdot\nabla}\bol{s}. \label{s1}
}
Note that this equation makes it clear that, on the time scale \(T\), the fluid dynamics is governed by the effective velocity field \(\bol{u}_T=\bol{u}-\bol{v}_s\), obtained by subtracting the fast spin motion \(\bol{v}_s\) from the total velocity field \(\bol{u}\), so that $\lr{\p_t+\bol{u}_T\cdot\nabla}\bol{s}=\kappa_s\bol{s}\times q\bol{B}/m$.  
Let us show how equation \eqref{s1} reduces to the spin field equation \eqref{eq:hydro-spin}.
From equations \eqref{vs} and \eqref{Bs} we first observe that, to leading order in $\delta\bol{B}$, 
\eq{
\bol{v}_s=\frac{c_{g}\hbar}{2m}\lr{\nabla\log\rho\cp\bol{s}+\nabla\cp\bol{s}}.\label{vs2}
}
Next, denoting by
\eq{
h_s=\bol{s}\cdot\nabla\cp\bol{s},
}
the helicity density of the spin field, recalling that \(\bol{s}^2=1\), and noting that  
\eq{
\nabla\cdot\bol{B}_s^e=c_g\frac{q\mu_0\hbar}{2m}\nabla\cdot\lr{\rho\bol{s}}=c_g\frac{q\mu_0\hbar}{2m}\rho\lr{\nabla\log\rho\cdot\bol{s}+\nabla\cdot\bol{s}}=0,\label{divBs}
}
implies \(\nabla\cdot\bol{s}=-\nabla\log\rho\cdot\bol{s}\), we obtain the identity
\eq{
\rho\lr{\bol{v}_s\cdot\nabla}\bol{s}=&\frac{c_g\hbar}{2m}\lrs{\lr{\nabla\rho\cp\bol{s}+\rho\nabla\cp\bol{s}}\cdot\nabla}\bol{s}\\
=&\frac{c_g\hbar}{2m}\lrc{
-\lr{\bol{s}\cdot\nabla}\lr{\nabla\rho\cp\bol{s}}
-\lr{\nabla\rho\cp\bol{s}}\cp\lr{\nabla\cp\bol{s}}-\bol{s}\cp\lrs{\nabla\cp\lr{\nabla\rho\cp\bol{s}}}
}\\
&+\frac{c_g\hbar\rho}{2m}
\lrs{\nabla h_s-\lr{\bol{s}\cdot\nabla}\nabla\cp\bol{s}-\bol{s}\cp\lr{\nabla\cp\nabla\cp\bol{s}}}\\
=&\frac{c_g\hbar}{2m}\lrc{
-\nabla\rho\cp\lrs{\lr{\bol{s}\cdot\nabla}\bol{s}}+h_s\nabla\rho-\lr{\nabla\cp\bol{s}\cdot\nabla\rho}\bol{s}-\bol{s}\cp\lrs{\lr{\nabla\cdot\bol{s}}\nabla\rho
-\lr{\nabla\rho\cdot\nabla}\bol{s}}}\\
&+\frac{c_g\hbar\rho}{2m}
\lrc{\nabla h_s-\lr{\bol{s}\cdot\nabla}\nabla\cp\bol{s}+\bol{s}\cp\lrs{\Delta\bol{s}-\nabla\lr{\nabla\cdot\bol{s}}}
}\\
=&\frac{c_g\hbar}{2m}\lrc{
\nabla\lr{\rho h_s}+\rho\bol{s}\cp\lrs{\Delta\bol{s}+\lr{\nabla\log\rho\cdot\nabla}\bol{s}}
}\\
&+\frac{c_g\hbar}{2m}\lrc{
-\lr{\bol{s}\cdot\nabla\rho}\nabla\cp\bol{s}
-\rho\lr{\bol{s}\cdot\nabla}\nabla\cp\bol{s}-\bol{s}\cp\lrs{\lr{\nabla\cdot\bol{s}}\nabla\rho+\rho\nabla\lr{\nabla\cdot\bol{s}}}
}\\
=&\frac{c_g\hbar}{2m}\lrc{
\nabla\lr{\rho h_s}+\rho\bol{s}\cp\lrs{\Delta\bol{s}+\lr{\nabla\log\rho\cdot\nabla}\bol{s}}
}-\frac{c_g\hbar}{2m}\lrc{
\lr{\bol{s}\cdot\nabla}\lr{\rho\nabla\cp\bol{s}}-\bol{s}\times\nabla\lr{\bol{s}\cdot\nabla\rho}
}.\label{vsid}
}
Now assume that the helicity density \(h_s\) and the variations along the spin field \(\bol{s}\cdot\nabla\) are small.
Since a vanishing helicity density \(h_s=0\) corresponds to the Frobenius integrability condition for the spin field \(\bol{s}\), implying that locally \(\bol{s}=\lambda\lr{\bol{x},t}\nabla C\lr{\bol{x},t}\) defines the normal of a surface \(\Sigma_C=\lrc{\bol{x}\in\Omega:C={\rm constant}}\), this is the case, for example, of \(\bol{s}=\pm \nabla z\) and a fluid dynamics developing on the planes \(\Sigma_z=\lrc{\bol{x}\in\Omega:z={\rm constant}}\), so that \(\bol{s}\cdot\nabla=\pm\p/\p z=0\).
Under such smallness assumption, retaining the dominant terms in the last line, and recalling equation \eqref{s1}, leads to the spin field equation
\eq{
\lr{\frac{\p}{\p t}+{\bol{u}}\cdot\nabla}\bol{s}=\kappa_s\bol{s}\times\frac{q\bol{B}}{m}+\frac{c_g\hbar}{2m}\bol{s}\cp\lrs{\Delta\bol{s}+\lr{\nabla\log\rho\cdot\nabla}\bol{s}},\label{eq:spinfluid-spin}
}
which is equivalent to the spin field equation \eqref{eq:hydro-spin} for \(\kappa_s=c_g=1\).
The continuity equation for the fluid on the time scale \(T\),
\eq{
\frac{\p\rho}{\p t}=-\nabla\cdot\lr{\rho\bol{u}},\label{eq:spinfluid-continuity}
}
also exhibits the same form of the continuity equation for the hydrodynamic form of the Pauli equation, equation \eqref{eq:hydro-continuity}. Thus, the remaining step is to show that the fluid also obeys the momentum equation \eqref{eq:hydro-momentum}.
To see this, we will evaluate the total force acting on the fluid by looking at its total energy and at  the amount of work needed to change it.  

We begin by writing 
the total kinetic energy associated with the velocity field $\bol{v}_s$ via \eqref{vs2}, 
\eq{
K_s=&\frac{m}{2}\int_{\Omega}\rho\bol{v}_s^2\,dV\\
=&\frac{c_{g}^2\hbar^2}{8m}\int_{\Omega}\rho\lrs{\abs{\nabla\log\rho}^2-\lr{\nabla\log\rho\cdot\bol{s}}^2+2\nabla\log\rho\cp\bol{s}\cdot\nabla\cp\bol{s}+\lr{\nabla\cp\bol{s}}^2}\,dV.\label{Ks}
}
The first term on the right-hand side
\eq{
\frac{c_g^2\hbar^2}{8m}\int_{\Omega}\rho\abs{\nabla\log\rho}^2\,dV,\label{Eq} 
}
is the ``quantum'' energy responsible  of 
the correction to the classical fluid energy leading to 
the Schr\"odinger equation in Madelung form (see \cite{Sato25QF}). 
The hydrodynamic Pauli system \eqref{eq:hydro-continuity}, \eqref{eq:hydro-momentum}, and \eqref{eq:hydro-spin}, follows by keeping the other terms, which explicitly involve  $\bol{s}$. To see this, we need two identities. First, notice the pointwise identity 
\eq{
\abs{\nabla \bol{s}}^2
=&
\abs{\nabla\times \bol{s}}^2
+
\abs{\nabla\cdot\bol{s}}^2
+
\nabla\cdot\lrs{(\bol{s}\cdot\nabla)\bol{s}-(\nabla\cdot\bol{s})\,\bol{s}}\\
=&\abs{\nabla\cp\bol{s}}^2+\nabla\cdot\lrs{\lr{\bol{s}\cdot\nabla}\bol{s}}-\bol{s}\cdot\nabla\lr{\nabla\cdot\bol{s}}.\label{FrobNorm}
}
Using equations \eqref{divBs} and \eqref{FrobNorm}, noting that $\nabla\log\rho\cdot\bol{s}=-\nabla\cdot\bol{s}$ and  $\Delta\abs{\bol{s}}^2=\Delta 1=0$, 
and performing integration by parts with the boundary conditions 
\eq{
\bol{s}\cdot\bol{n}=
\nabla\cp\bol{s}\cdot\bol{n}=
0\qquad{\rm on}~~\p\Omega, \label{bc}
}
where $\bol{n}$ is the unit outward normal on $\p\Omega$, 
the kinetic energy \eqref{Ks} can be expressed as 
\eq{
K_s
=&\frac{c_g^2\hbar^2}{8m}\int_{\Omega}\rho\lrc{\abs{\nabla\log \rho}^2+\abs{\nabla\bol{s}}^2
-\lr{\nabla\cdot\bol{s}}^2-2\nabla\cdot\lrs{\bol{s}\cp\lr{\nabla\cp\bol{s}}}
-\nabla\cdot\lrs{\lr{\bol{s}\cdot\nabla}\bol{s}}
+\bol{s}\cdot\nabla\lr{\nabla\cdot\bol{s}}
}dV\\
=&\frac{c_g^2\hbar^2}{8m}\int_{\Omega}\rho\lrc{\abs{\nabla\log \rho}^2+\abs{\nabla\bol{s}}^2-\lr{\nabla\cdot\bol{s}}^2
-\nabla\cdot\lrs{\lr{\bol{s}\cdot\nabla}\bol{s}+2\bol{s}\cp\lr{\nabla\cp\bol{s}}}}dV\\
=&\frac{c_g^2\hbar^2}{8m}\int_{\Omega}\rho\lrc{\abs{\nabla\log \rho}^2+\abs{\nabla\bol{s}}^2-\lr{\nabla\cdot\bol{s}}^2
+\nabla\cdot\lrs{\lr{\bol{s}\cdot\nabla}\bol{s}}-\Delta\abs{\bol{s}}^2}dV,\\
=&
\frac{c_g^2\hbar^2}{8m}\int_{\Omega}\rho\lrc{\abs{\nabla\log \rho}^2+\abs{\nabla\bol{s}}^2+
\nabla\cdot\lrs{\lr{\bol{s}\cdot\nabla}\bol{s}-\lr{\nabla\cdot\bol{s}}\bol{s}}
}dV.
}
The second term of the last line of this equation,
\eq{
\frac{c_g^2\hbar^2}{8m}\int_{\Omega}\rho\abs{\nabla\bol{s}}^2\,dV, \label{Eds}
}
is the spin-stress energy term of the 
Takabayasi Hamiltonian (the expectation value of the Pauli Hamiltonian operator \eqref{PauliHop}, see \cite{Tak54}). We will see that it is this term that, once added to the quantum energy \eqref{Eq}, elevates the Schr\"odinger-Madelung picture to the Pauli-Madelung picture. 
In the following we shall therefore neglect the third term
\eq{
\frac{c_g^2\hbar^2}{8m}\int_{\Omega}\rho\nabla\cdot\lrs{\lr{\bol{s}\cdot\nabla}\bol{s}-\lr{\nabla\cdot\bol{s}}\bol{s}}\,dV=\frac{c_g^2\hbar^2}{8m}\int_{\Omega}\rho\nabla\cdot\lrs{
\frac{1}{\rho}\lr{\bol{s}\cdot\nabla}\lr{\rho\bol{s}}
}\,dV.\label{ER}
}
Before proceeding further, it is however useful to clarify the meaning of this approximation: when $\bol{s}$
defines the normal of surface (this is, for example, the case of a constant spin field $\bol{s}=\pm\nabla z$), 
it amounts to 
neglecting the scalar curvature of such surface (see remark below). 
The smallness of this term also  follows by the smallness hypothesis for $\bol{s}\cdot\nabla$ invoked in the derivation of the spin equation. 
\begin{remark}[Interpretation of the energy term \eqref{ER}] 
Suppose that $\bol{s}$ is locally normal to a family of surfaces (the Frobenius integrability condition $\bol{s}\cdot\nabla\cp\bol{s}=0$ holds). Let $R$ denote the scalar curvature of one such surface. If $\{e_1,e_2\}$ is a local orthonormal frame tangent to the surface, then the second fundamental form is given by
\eq{
A_{ab}=-e_a\cdot \nabla_{e_b}\bol{s}.
}
Hence its trace is
\eq{
H=A_{11}+A_{22}=-\nabla\cdot\bol{s},
}
while its squared norm is
\eq{
|A|^2=\sum_{a,b=1}^2 A_{ab}^2
=\sum_{a=1}^2 |\nabla_{e_a}\bol{s}|^2
=
|\nabla\bol{s}|^2-|(\bol{s}\cdot\nabla)\bol{s}|^2.
}
By the Gauss equation for a surface in Euclidean space \cite{doCarmo76}[p. 146],
\eq{
R=H^2-|A|^2,
}
and therefore
\eq{
R=(\nabla\cdot\bol{s})^2-|\nabla\bol{s}|^2+|(\bol{s}\cdot\nabla)\bol{s}|^2.
}
If in addition $\bol{s}$ is hypersurface-orthogonal, then $\bol{s}\cdot(\nabla\times\bol{s})=0$, and for a unit vector field this implies
\eq{
|(\bol{s}\cdot\nabla)\bol{s}|^2=|\nabla\times\bol{s}|^2.
}
Combining this with the identity
\eq{
|\nabla\bol{s}|^2
=
|\nabla\times\bol{s}|^2
+
(\nabla\cdot\bol{s})^2
+
\nabla\cdot\lrs{(\bol{s}\cdot\nabla)\bol{s}-(\nabla\cdot\bol{s})\bol{s}},
}
one finds
\eq{
R=
-\nabla\cdot\lrs{(\bol{s}\cdot\nabla)\bol{s}-(\nabla\cdot\bol{s})\bol{s}}.
}
Thus, in the unit-length hypersurface-orthogonal case, the divergence term appearing above is precisely minus the scalar curvature of the surfaces orthogonal to $\bol{s}$. 
If we assume that $R$ is small 
, then we may neglect the corresponding term in the kinetic energy $K_s$. 
\end{remark}

Keeping the two terms \eqref{Eq} and \eqref{Eds} in the expression of $K_s$, we may now write the total energy of the fluid as
\eq{
\mathscr{H}\lrs{\rho,\bol{u},\bol{s}}=\int_{\Omega}\rho\lrs{
\frac{1}{2}m\bol{u}^2+q\Phi+V+\frac{h\lr{\rho}}{\rho}+\frac{c_g^2\hbar^2}{8m}
\lr{\abs{\nabla\log\rho}^2+\abs{\nabla\bol{s}}^2}-\frac{c_g\kappa_s q\hbar}{2m}\bol{B}\cdot\bol{s}
}\,dV,\label{fE}
}
where the first term is the kinetic energy associated with the flow $\bol{u}$, the second term the potential energy associated with the electrostatic potential $\Phi\lr{\bol{x},t}$, the third term the energy associated with an external potential $V\lr{\bol{x}}$, the fourth term the internal energy associated with a barotropic mechanical pressure field $P\lr{\rho}$ with $dP/d\rho=\rho d^2h/d{\rho^2}$ and $h=h\lr{\rho}$, the fifth and sixth terms the quantum-potential and spin-stress contributions coming from the kinetic energy $K_s$ of the flow $\bol{v}_s$, and the last term the classical interaction energy between the magnetic field $\bol{B}$ and the magnetic moment $c_gq\hbar\bol{s}/2m$ via the coupling constant $\kappa_s$. 

The total force $\bol{F}\lr{\bol{x},t}$ acting on the fluid can now be evaluated by considering the work done on the system. To this end, we consider the infinitesimal displacement of the fluid $\delta\bol{\ell}=\bol{u}\lr{\bol{x},t}\delta t$ caused by the action of the force $\bol{F}$ over an infinitesimal time interval $\delta t$, while enforcing conservation of energy in the case of steady external electromagnetic fields $\bol{E}=-\nabla\Phi\lr{\bol{x}}$ and $\bol{B}\lr{\bol{x}}$:
\eq{
0=\delta\mathscr{H}
=&\int_{\Omega}\lrs{
\frac{1}{2}m\bol{u}^2+q\Phi+V+h'+c_g^2Q+\frac{c_g^2\hbar^2}{8m}\abs{\nabla\bol{s}}^2-\frac{c_g\kappa_s q\hbar}{2m}\bol{B}\cdot\bol{s}
}\delta\rho\,dV\\
&+m\int_{\Omega}\rho\bol{u}\cdot\delta\bol{u}\,dV+\int_{\Omega}\lrs{-\frac{c_g^2\hbar^2}{4m}\lr{\rho\Delta s^k+{\nabla\rho\cdot\nabla s^k}}-\frac{c_g\kappa_s q\hbar}{2m}\rho B^k}\delta s_k\,dV,\label{dH}
}
where $h'=dh/d\rho$ and we used integration by parts with vanishing variations on the boundary $\p\Omega$.

Now observe that $\rho$ is the coefficient of a top form, $\bol{u}$ is a vector field, and $s^k$ is the scalar component of the spin field. Hence their variations are related to $\delta\bol{\ell}$ and $\delta t$ by
\eq{
\delta\rho=-\nabla\cdot\lr{\rho\delta\bol{\ell}},\qquad
\delta\bol{u}=\frac{\bol{F}}{m}\delta t-\lr{\bol{u}\cdot\nabla}\bol{u}\,\delta t,
\qquad
\delta s_k=-\nabla s_k\cdot\delta\bol{\ell}. \label{deltas}
}
Note that magnetic torques have not been included in $\delta s_k$, since they do not contribute to the work done on the system. 
Substituting the expressions \eqref{deltas} for the variations of the fields into equation \eqref{dH}, we obtain
\eq{
\int_{\Omega}\rho\bol{F}\cdot\delta\bol{\ell}\,dV=&
-\int_{\Omega}\rho\nabla\lr{
q\Phi+V+h'+c_g^2Q+\frac{c_g^2\hbar^2}{8m}\abs{\nabla\bol{s}}^2-\frac{c_g\kappa_s q\hbar}{2m}\bol{B}\cdot\bol{s}}\cdot\delta\bol{\ell}\,dV\\
&-\int_{\Omega}\lrc{\frac{c_g^2\hbar^2}{4m}\lrs{\rho\Delta s^k+\lr{\nabla\rho\cdot\nabla s^k}}+\frac{c_g\kappa_s q\hbar}{2m}\rho B^k}\nabla s_k\cdot\delta\bol{\ell}\,dV.
}
Since the displacement $\delta\bol{\ell}$ is arbitrary, this equation determines the force $\bol{F}$ up to a contribution perpendicular to $\delta\bol{\ell}=\bol{u}\delta t$. The only force perpendicular to $\delta\bol{\ell}$ is the magnetic part of the Lorentz force. We thus find
\eq{
\bol{F}=q\bol{u}\cp\bol{B}-\nabla\lr{q\Phi+V+c_g^2Q}-\frac{1}{\rho}\nabla P+\frac{c_g\kappa_s q\hbar}{2m}s_k\nabla B^k-\frac{c_g^2}{\rho}\nabla\cdot\Pi,
}
and hence the momentum equation
\eq{
m\rho\lr{\frac{\p}{\p t}+{\bol{u}\cdot\nabla}}\bol{u}=q\rho\lr{\bol{E}+\bol{u}\cp\bol{B}}-\rho\nabla\lr{V+c_g^2Q}-\nabla P+\frac{c_g\kappa_s q\hbar}{2m}\rho s_k\nabla B^k-
c^2_g\nabla\cdot\Pi.\label{eq:spinfluid-momentum}
}
We note that the momentum equation \eqref{eq:spinfluid-momentum} remains valid in the case of time-dependent external electromagnetic fields, because the Lorentz force retains the same form.
Equation \eqref{eq:spinfluid-momentum} reduces to the momentum equation \eqref{eq:hydro-momentum} of the hydrodynamic Pauli equation for $\kappa_s=c_g=1$, $P={\rm constant}$, and $V={\rm constant}$.

We may summarize the derivation of the present section as follows:

\begin{proposition}[Fluid with spin ordering leading to the hydrodynamic form of the Pauli equation] \label{prop:ordering}
Let $\epsilon>0$ denote a small ordering parameter, and assume that
\eq{
\frac{\tau_s}{T}\sim h_s\sim \bol{s}\cdot\nabla\sim \frac{\abs{\bol{\mc{B}}}}{\abs{\bol{B}_{\rm ext}}}\sim\frac{\abs{\bol{\mc{E}}}}{\abs{\bol{E}_{\rm ext}}}\sim P\sim\epsilon,\qquad \kappa_s=c_g=1.
}
Then, to leading order, a fluid with spin obeys the hydrodynamic (Madelung) form of the Pauli equation, eqs. \eqref{eq:hydro-continuity}, \eqref{eq:hydro-momentum}, and \eqref{eq:hydro-spin}.
\end{proposition}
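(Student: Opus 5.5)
The plan is to treat Proposition~\ref{prop:ordering} as a bookkeeping statement: the derivation of this section already contains every ingredient. The work is to check that each approximation used along the way discards only terms of order $\epsilon$ or higher. I will proceed equation by equation, starting from the fluid-with-spin model \eqref{vs}, \eqref{s0}, \eqref{Bs} and setting $\kappa_s=c_g=1$ throughout.

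\textbf{Step 1 (reduction of phase space and electrodynamics).} The ordering $\abs{\bol{\mc{B}}}/\abs{\bol{B}_{\rm ext}}\sim\abs{\bol{\mc{E}}}/\abs{\bol{E}_{\rm ext}}\sim\epsilon$ lets me replace the total fields by the external ones at leading order. This removes $(\bol{E},\bol{B})$ as dynamical variables and reduces the phase space to $(\rho,\bol{u},\bol{s})$. Neglecting the displacement current (magnetoquasistatic limit) then gives \eqref{vs}. The assumption $\tau_s/T\sim\epsilon$ justifies relaxing the fast dynamics \eqref{s0} to the equilibrium \eqref{Bs}, up to the deviation $\delta\bol{B}$. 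Hence \eqref{vs2} holds to leading order, and so does the quasistationary balance \eqref{s_}.

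\textbf{Step 2 (continuity and spin equations).} The continuity equation \eqref{eq:spinfluid-continuity} is exact, and it coincides with \eqref{eq:hydro-continuity}. For the spin equation I will start from \eqref{s1} and insert the identity \eqref{vsid}. The remainder terms in its last line are $\nabla(\rho h_s)$, $(\bol{s}\cdot\nabla)(\rho\nabla\cp\bol{s})$ and $\bol{s}\cp\nabla(\bol{s}\cdot\nabla\rho)$. Each contains either a factor $h_s$ or a derivative $\bol{s}\cdot\nabla$, so each is $O(\epsilon)$. Discarding them yields \eqref{eq:spinfluid-spin}, which is \eqref{eq:hydro-spin} when $\kappa_s=c_g=1$.

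\textbf{Step 3 (momentum equation).} I will expand $K_s$ as in \eqref{Ks} and rewrite it using \eqref{FrobNorm}, the divergence constraint \eqref{divBs}, and the boundary conditions \eqref{bc}. The only term dropped is \eqref{ER}, written as $\rho\nabla\cdot[\rho^{-1}(\bol{s}\cdot\nabla)(\rho\bol{s})]$. It carries an explicit factor $\bol{s}\cdot\nabla$ and is therefore $O(\epsilon)$. Next I form the energy \eqref{fE} and carry out the virtual-work computation \eqref{dH}--\eqref{deltas}, which gives \eqref{eq:spinfluid-momentum}. The assumption $P\sim\epsilon$ removes the pressure term, and taking $V$ constant removes the external potential. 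With $c_g=\kappa_s=1$ this reproduces \eqref{eq:hydro-momentum}, using $-\rho\nabla Q=\frac{\hbar^2}{2m}\rho\nabla(\Delta\sqrt\rho/\sqrt\rho)$ and the definition \eqref{eq:spin-stress-prop} of $\Pi$. As part of this step I will confirm that $-\frac{\hbar^2}{4m}(\rho\Delta s^k+\nabla\rho\cdot\nabla s^k)\nabla s_k$ minus $\frac{\hbar^2}{8m}\rho\nabla\abs{\nabla\bol{s}}^2$ equals $-\nabla\cdot\Pi$. This is a direct index computation.

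\textbf{Main obstacle.} The delicate point is consistency of the orderings rather than any single identity. Equation \eqref{vs2} is only leading order in $\delta\bol{B}$. I must check that the $\delta\bol{B}$ corrections to $K_s$ and to $\rho(\bol{v}_s\cdot\nabla)\bol{s}$ are controlled by $\tau_s/T$, since the equilibrium is approached on the fast scale, and are therefore $O(\epsilon)$. I would argue this from \eqref{s_}: $\delta\bol{B}$ is slaved to $(\bol{v}_s\cdot\nabla)\bol{s}$ and relaxes on $\tau_s$, so its time-averaged contribution on scale $T$ is suppressed by $\tau_s/T$. I would also check that $\bol{s}\cdot\nabla\sim\epsilon$ is compatible with the constraint $\nabla\cdot\bol{s}=-\nabla\log\rho\cdot\bol{s}$. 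It is, because the constraint itself then forces $\bol{s}\cdot\nabla\rho=O(\epsilon)$, so no hidden $O(1)$ term survives in \eqref{vsid} or in \eqref{ER}.
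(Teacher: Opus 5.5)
Your Steps 1--3 follow the paper's derivation in Section~\ref{sec:PauliDer} essentially line by line. Your index check that $-\frac{\hbar^2}{4m}(\rho\Delta s^k+\nabla\rho\cdot\nabla s^k)\nabla s_k-\frac{\hbar^2}{8m}\rho\nabla\abs{\nabla\bol{s}}^2=-\nabla\cdot\Pi$ is correct. However, the bookkeeping you make the core of the proof does not close, for two concrete reasons.

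\emph{The $\delta\bol{B}$ argument.} Your control of $\delta\bol{B}$ by $\tau_s/T$ is self-defeating. In \eqref{s1} the entire $\hbar$-dependent torque comes from $\kappa_s\rho\,\bol{s}\cp q\,\delta\bol{B}/m$, since $\bol{s}\cp\bol{B}_s^e=\bol{0}$. Equation \eqref{s_} turns this into $\rho(\bol{v}_s\cdot\nabla)\bol{s}$, and \eqref{vsid} turns that into $\frac{\hbar}{2m}\rho\,\bol{s}\cp[\Delta\bol{s}+(\nabla\log\rho\cdot\nabla)\bol{s}]$. If $\delta\bol{B}$ is slaved to $(\bol{v}_s\cdot\nabla)\bol{s}$, it is not suppressed on the slow scale. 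If instead its slow-scale effect were $O(\tau_s/T)$, that torque would vanish and you would be left with pure precession, not \eqref{eq:hydro-spin}. The paper truncates asymmetrically: it keeps $\bol{s}\cp\delta\bol{B}$ but drops $\nabla\cp\delta\bol{B}/(\mu_0 q\rho)$ from $\bol{v}_s$ (``to leading order in $\delta\bol{B}$''). No parameter in the proposition's ordering list controls that step, so it has to be stated as a separate hypothesis rather than derived from $\tau_s/T$.

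\emph{Retained versus discarded terms.} Showing that the discarded terms are $O(\epsilon)$ is not enough; you also need the retained spin-gradient terms to dominate them. Under the stated ordering they do not. Because $\abs{\bol{s}}=1$, you have $\bol{s}\cp(\nabla\cp\bol{s})=-(\bol{s}\cdot\nabla)\bol{s}=O(\epsilon)$. Together with $\bol{s}\cdot(\nabla\cp\bol{s})=h_s=O(\epsilon)$, this gives $\nabla\cp\bol{s}=O(\epsilon)$. The constraint \eqref{divBs} gives $\nabla\cdot\bol{s}=-\bol{s}\cdot\nabla\log\rho=O(\epsilon)$; this is where your compatibility check should have gone one step further. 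The paper's exact identity \eqref{FrobNorm} then yields $\nabla\cdot[(\bol{s}\cdot\nabla)\bol{s}-(\nabla\cdot\bol{s})\bol{s}]=\abs{\nabla\bol{s}}^2-\abs{\nabla\cp\bol{s}}^2-(\nabla\cdot\bol{s})^2=\abs{\nabla\bol{s}}^2+O(\epsilon^2)$. So the term \eqref{ER} you discard coincides, up to $O(\epsilon^2)$, with the spin-stress energy \eqref{Eds} you keep. Consistently, every $\bol{s}$-dependent term in the original integrand \eqref{Ks} is pointwise $O(\epsilon)$. This leaves two possibilities:
\begin{itemize}
\item If \eqref{ER} is $O(\epsilon)$, then so are \eqref{Eds} and $\Pi$, and a consistent leading-order truncation removes the spin stress from \eqref{eq:hydro-momentum} as well.
\item If $\abs{\nabla\bol{s}}^2$ is $O(1)$, then \eqref{ER} cannot be dropped. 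Keeping it roughly doubles the spin-gradient energy, since $\abs{\nabla\bol{s}}^2+\nabla\cdot[\cdots]\approx 2\abs{\nabla\bol{s}}^2$.
\end{itemize}
The paper drops \eqref{ER} on the same grounds you give, so following its route does not repair this. What the argument actually delivers is a formal truncation (retain \eqref{Eds}, discard \eqref{ER} by fiat), not an $\epsilon$-controlled leading-order statement. Your proposal should either say so explicitly or add a hypothesis under which the retained spin-gradient terms genuinely dominate the discarded ones.
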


\section{Hamiltonian structure of a fluid with spin}\label{sec:Hamiltonian}
The aim of the present section is to discuss the Hamiltonian structure of the hydrodynamic (Madelung) form of the Pauli equation, namely
\eqref{eq:hydro-continuity},
\eqref{eq:hydro-momentum},
and \eqref{eq:hydro-spin},
or, equivalently, of the governing equations
\eqref{eq:spinfluid-continuity},
\eqref{eq:spinfluid-momentum},
and \eqref{eq:spinfluid-spin}
for a fluid with spin. 
For the classical foundations of the noncanonical Hamiltonian structure of fluids and plasmas, we refer the reader to the seminal works of Morrison \cite{Morrison1982} and Littlejohn \cite{Littlejohn1982}. 


We begin by observing that the phase space of the system is spanned by the fields
\(
\bol{z}=\lr{\rho,\bol{u},\bol{s}}
\), with $\rho>0$ and $\abs{s}=1$. 
Moreover, for $\kappa_s=c_g=1$,  $V=h=0$, 
and $\bol{u}\cdot\bol{n}=\nabla\rho\cdot\bol{n}=0$ on $\p\Omega$, 
the total energy (Hamiltonian) \eqref{fE} coincides with the expectation value of the Pauli Hamiltonian operator \eqref{PauliHop}, namely
\eq{
\mathscr{H}\lrs{\rho,\bol{u},\bol{s}}=\int_{\Omega}\Psi^\dagger H\Psi\,dV.\label{HtoHop}
}

The Hamiltonian structure of the system is then encoded by the Hamiltonian \eqref{fE} together with the Poisson operator
\begin{equation}
\mc{J}=
\begin{bmatrix}
0 & -\frac{1}{m}\nabla\cdot & 0 & 0 & 0\\
-\frac{1}{m}\nabla & -\frac{1}{m\rho}\lrs{\nabla\cp\lr{\bol{u}+\frac{q\bol{A}}{m}}}\cp & \frac{1}{m\rho}\nabla s_1 & \frac{1}{m\rho}\nabla s_2 & \frac{1}{m\rho}\nabla s_3\\
0 & -\frac{1}{m\rho}\nabla s_1\cdot & &  &\\
0 & -\frac{1}{m\rho}\nabla s_2\cdot & & -\frac{2}{c_g\hbar\rho}\bol{s}\cp &\\
0 & -\frac{1}{m\rho}\nabla s_3\cdot & & &
\end{bmatrix},\label{J}
\end{equation}
which defines the noncanonical Poisson bracket
\eq{
\lrc{F,G}=&
\int_{\Omega}
\frac{\delta F}{\delta\bol{z}}\cdot\mc{J}\frac{\delta G}{\delta\bol{z}}
\,dV\\
=&
\int_{\Omega}
\lrc{-\frac{1}{m}\frac{\delta F}{\delta\rho}\nabla\cdot\frac{\delta G}{\delta \bol{u}}
-\frac{1}{m}\frac{\delta F}{\delta \bol{u}}\cdot\nabla\lr{\frac{\delta G}{\delta\rho}}
-\frac{1}{m\rho}\frac{\delta F}{\delta\bol{u}}\cdot\lrs{\nabla\cp\lr{\bol{u}+\frac{q\bol{A}}{m}}}\cp\frac{\delta G}{\delta\bol{u}}}\,dV\\
&+\int_{\Omega}\lrs{
\frac{1}{m\rho}\lr{\frac{\delta G}{\delta s_k}\frac{\delta F}{\delta \bol{u}}-\frac{\delta F}{\delta s_k}\frac{\delta G}{\delta\bol{u}}}\cdot\nabla s_k
-\frac{2}{c_g\hbar\rho}\frac{\delta F}{\delta \bol{s}}\cdot\bol{s}\cp\frac{\delta G}{\delta\bol{s}}
}
\,dV,\label{PB}
}
for smooth functionals $F$ and $G$ of the phase space variables $\bol{z}$. 

It is worth observing that the Poisson bracket \eqref{PB} reduces, upon dropping the spin field $\bol{s}$ and the magnetic field $\bol{B}$, to the classical noncanonical Poisson bracket of an Euler fluid; see, for example, \cite{Morrison1998}, \cite[Part I, chap. 6]{ArnoldKhesin2021}, and \cite{Hamdi2015}. 

We have the following.

\begin{proposition}[Hamiltonian structure]\label{prop:Hamiltonian-structure}
The hydrodynamic (Madelung) form of the Pauli equation, \eqref{eq:hydro-continuity}, \eqref{eq:hydro-momentum}, and \eqref{eq:hydro-spin}, as well as the governing equations for a fluid with spin, \eqref{eq:spinfluid-continuity}, \eqref{eq:spinfluid-momentum}, and \eqref{eq:spinfluid-spin}, can be written in the noncanonical Hamiltonian form
\eq{
\frac{\p\bol{z}}{\p t}=\lrc{\bol{z},\mathscr{H}}.\label{HEoM}
}
\end{proposition}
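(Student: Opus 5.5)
The plan is a direct computation. First I compute the functional derivatives of $\mathscr{H}$ in \eqref{fE}. Then I insert them into the bracket \eqref{PB}, using $F=\rho$, $F=\bol{u}$ and $F=\bol{s}$ pointwise, and read off $\p_t\bol{z}=\mc{J}\,\delta\mathscr{H}/\delta\bol{z}$ component by component. Most of the derivatives were already obtained in \eqref{dH}:
\begin{align*}
\frac{\delta\mathscr{H}}{\delta\rho}&=\frac{1}{2}m\bol{u}^2+q\Phi+V+h'+c_g^2Q+\frac{c_g^2\hbar^2}{8m}\abs{\nabla\bol{s}}^2-\frac{c_g\kappa_s q\hbar}{2m}\bol{B}\cdot\bol{s},\qquad
\frac{\delta\mathscr{H}}{\delta\bol{u}}=m\rho\bol{u},\\
\frac{\delta\mathscr{H}}{\delta s_k}&=-\frac{c_g^2\hbar^2}{4m}\p_i\lr{\rho\,\p_i s^k}-\frac{c_g\kappa_s q\hbar}{2m}\rho B^k .
\end{align*}
The constraint $\abs{\bol{s}}=1$ is harmless here. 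Any component of $\delta\mathscr{H}/\delta\bol{s}$ along $\bol{s}$ is annihilated both by $\bol{s}\times$ and by the coupling $\nabla s_k\,(\cdot)_k$, since $s_k\nabla s_k=\tfrac12\nabla\abs{\bol{s}}^2=0$. So the unconstrained derivative can be used.

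\emph{Continuity and spin equations.} The first row of $\mc{J}$ gives $\p_t\rho=-\tfrac1m\nabla\cdot(m\rho\bol{u})$, which is \eqref{eq:hydro-continuity}. The spin rows give
\[
\p_t\bol{s}=-\frac{1}{m\rho}\nabla\bol{s}\cdot(m\rho\bol{u})-\frac{2}{c_g\hbar\rho}\,\bol{s}\times\frac{\delta\mathscr{H}}{\delta\bol{s}}
=-(\bol{u}\cdot\nabla)\bol{s}+\frac{\kappa_s q}{m}\bol{s}\times\bol{B}+\frac{c_g\hbar}{2m\rho}\bol{s}\times\p_i(\rho\p_i\bol{s}),
\]
which is \eqref{eq:spinfluid-spin}. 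For $\kappa_s=c_g=1$ this is \eqref{eq:hydro-spin}, by the identity $\p_i(\rho\p_i\bol{s})=(\nabla\rho\cdot\nabla)\bol{s}+\rho\Delta\bol{s}$. The sign of the $\bol{s}\times$ block in \eqref{J} is exactly what makes the Zeeman term come out as $+\bol{s}\times\bol{B}$, and I will check that sign explicitly.

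\emph{Momentum equation.} This is the main step. The $\bol{u}$ row gives
\[
\p_t\bol{u}=-\frac1m\nabla\frac{\delta\mathscr{H}}{\delta\rho}-\frac{1}{m\rho}\lrs{\nabla\times\lr{\bol{u}+\tfrac{q\bol{A}}{m}}}\times m\rho\bol{u}+\frac{1}{m\rho}\frac{\delta\mathscr{H}}{\delta s_k}\nabla s_k .
\]
The vortical term combined with $-\nabla(\bol{u}^2/2)$ produces $-(\bol{u}\cdot\nabla)\bol{u}$. The vector-potential part contributes $\tfrac{q}{m}\bol{u}\times\bol{B}$. The gradient of $q\Phi$ gives the electric force, and I will treat $\p_t\bol{A}$ through time-dependent potentials as in the previous section. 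The terms $\nabla(V+h'+c_g^2Q)$ give $\nabla V$, $\rho^{-1}\nabla P$ and the Bohm force.

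For the magnetic-moment terms, $-\rho\nabla(\bol{B}\cdot\bol{s})+\rho B^k\nabla s_k=-\rho s_k\nabla B^k$. After multiplying by the prefactor $-\tfrac{c_g\kappa_s q\hbar}{2m}$, this yields $+\tfrac{c_g\kappa_s q\hbar}{2m}\rho s_k\nabla B^k$, as required.

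The hardest bookkeeping is the spin-stress part. I have to show
\[
-\frac{\rho}{8m}\nabla\abs{\nabla\bol{s}}^2-\frac{1}{4m}\p_i(\rho\p_i s^k)\nabla s_k=-\frac{1}{\hbar^2}\nabla\cdot\Pi ,
\]
taken componentwise and up to the common factor $c_g^2\hbar^2$. Expanding $\p_i\Pi^{ij}$ with $\Pi^{ij}=\frac{\hbar^2\rho}{4m}\p_i\bol{s}\cdot\p_j\bol{s}$ gives two pieces: $\p_i(\rho\p_i s^k)\p_j s_k$, and $\rho\,\p_i s^k\p_i\p_j s_k=\tfrac{\rho}{2}\p_j\abs{\nabla\bol{s}}^2$. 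These are precisely the two terms above, in the same way as in \eqref{divPi}. Multiplying through by $m\rho$ then yields \eqref{eq:spinfluid-momentum}, and hence \eqref{eq:hydro-momentum} when $\kappa_s=c_g=1$ and $V,P$ are constant.

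Finally, since the statement calls \eqref{PB} a Poisson bracket, I would sketch its antisymmetry, which is manifest. I would also sketch the Jacobi identity: the $(\rho,\bol{u})$ block is the standard Euler--Maxwell bracket, and the spin block is the Lie--Poisson bracket on $\mathfrak{so}(3)$-valued fields, coupled by the usual semidirect advection terms. The Jacobi identity could be verified via the Morrison functional-derivative method, or by noting that the structure is the reduction from Clebsch-type variables. I expect this to be the most delicate part, though it is not needed for \eqref{HEoM} itself.
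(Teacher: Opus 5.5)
Your verification of the equations of motion is correct, and it is more explicit than the paper's "by direct computation":
\begin{itemize}
\item your functional derivatives agree with the paper's;
\item your remark that any component of $\delta\mathscr{H}/\delta\bol{s}$ along $\bol{s}$ is annihilated by $\bol{s}\times$ and by $s_k\nabla s_k=\bol{0}$ correctly handles the constraint $\abs{\bol{s}}=1$, which the paper never addresses;
\item the Zeeman sign, the magnetic-moment force and the $\nabla\cdot\Pi$ identity all check out.
\end{itemize}

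\textbf{The gap.} The gap is in what the paper's proof calls \emph{the only nontrivial point}: the Jacobi identity for \eqref{PB}. You set it aside as ``not needed for \eqref{HEoM}''. Without it, however, $\p_t\bol{z}=\lrc{\bol{z},\mathscr{H}}$ is only an antisymmetric rewriting of the equations, not a noncanonical Hamiltonian form. Your block-by-block sketch would not establish it. In the variables $\lr{\rho,\bol{u},\bol{s}}$ the bracket is not a sum of separately known Poisson brackets:
\begin{itemize}
\item the spin block carries the dynamical factor $1/\rho$, so it is not literally the $\mathfrak{so}(3)$ Lie--Poisson bracket;
\item the $\nabla s_k$ terms couple $\bol{u}$ and $\bol{s}$;
\item a sum of Poisson brackets need not satisfy Jacobi unless the mixed terms are shown to cancel, and nothing in your sketch shows this.
\end{itemize}

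\textbf{The missing idea.} It is the change of variables $\bol{M}=\rho\lr{m\bol{u}+q\bol{A}}$, $\bol{\Sigma}=\frac{c_g\hbar}{2}\rho\bol{s}$, with the chain rule
\begin{gather*}
\frac{\delta F}{\delta\bol{u}}=m\rho\,\frac{\delta\overline{F}}{\delta\bol{M}},\qquad
\frac{\delta F}{\delta\bol{s}}=\frac{c_g\hbar}{2}\rho\,\frac{\delta\overline{F}}{\delta\bol{\Sigma}},\\
\frac{\delta F}{\delta\rho}=\frac{\delta\overline{F}}{\delta\rho}+\frac{\bol{M}}{\rho}\cdot\frac{\delta\overline{F}}{\delta\bol{M}}+\frac{\bol{\Sigma}}{\rho}\cdot\frac{\delta\overline{F}}{\delta\bol{\Sigma}}.
\end{gather*}
Write $\overline{F}_{\bol{M}}=\delta\overline{F}/\delta\bol{M}$, and similarly for the other derivatives. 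Then:
\begin{itemize}
\item the $\bol{\Sigma}/\rho$ part of $\delta F/\delta\rho$ combines with the $\nabla s_k$ cross terms into $-\int_\Omega\Sigma_k\lr{\overline{F}_{\bol{M}}\cdot\nabla\overline{G}_{\Sigma_k}-\overline{G}_{\bol{M}}\cdot\nabla\overline{F}_{\Sigma_k}}dV$;
\item the $\bol{M}/\rho$ part combines with the vorticity term, absorbing $q\bol{A}$, into $-\int_\Omega\bol{M}\cdot\lrs{(\overline{F}_{\bol{M}}\cdot\nabla)\overline{G}_{\bol{M}}-(\overline{G}_{\bol{M}}\cdot\nabla)\overline{F}_{\bol{M}}}dV$;
\item the spin block becomes $\int_\Omega\bol{\Sigma}\cdot\overline{F}_{\bol{\Sigma}}\times\overline{G}_{\bol{\Sigma}}\,dV$.
\end{itemize}
The result is the Lie--Poisson bracket on the dual of the semidirect product $\mathfrak{X}(\Omega)\ltimes\lr{C^\infty(\Omega)\oplus C^\infty(\Omega,\mathfrak{so}(3))}$. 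Jacobi therefore holds automatically and transfers back through the invertible map.

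\textbf{A further caveat, shared with the paper.} In the $\bol{u}$-variables, $\lrc{\bol{u},\mathscr{H}}$ produces only $-\frac{q}{m}\nabla\Phi+\frac{q}{m}\bol{u}\times\bol{B}$. The inductive term $-\frac{q}{m}\p_t\bol{A}$ cannot come out of the bracket; in the $\bol{M}$-variables it arises only from the explicit time dependence of $\bol{u}=\bol{M}/(m\rho)-q\bol{A}/m$. So treating $\p_t\bol{A}$ ``as in the previous section'' does not work inside the bracket formalism. You should either restrict to static $\bol{B}$ in a gauge with $\p_t\bol{A}=\bol{0}$, or add that explicit term.
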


\noindent Formally, proposition \ref{prop:Hamiltonian-structure} follows by verifying that the bracket \eqref{PB} satisfies the Poisson bracket axioms (bilinearity, antisymmetry, the Leibniz rule, and the Jacobi identity) under appropriate boundary conditions, 
by computing 
the functional derivatives of the Hamiltonian $\mathscr{H}$ 
and then substituting 
their 
expressions into \eqref{HEoM}.
The detailed proof is given below. 

\begin{proof}
The only nontrivial point is the Jacobi identity for the bracket \eqref{PB}.  
To establish it, we introduce the variables
\eq{
\bol{M}=\rho\lr{m\bol{u}+q\bol{A}},\qquad
\boldsymbol{\Sigma}=\frac{c_g\hbar}{2}\rho\bol{s}.
\label{MSvars}
}
Let \(T:(\rho,\bol{u},\bol{s})\mapsto(\rho,\bol{M},\bol{\Sigma})\) denote the change of variables \eqref{MSvars}. 
Since \(\rho>0\) and \(\bol{A}\) is an externally prescribed field, the map
\(
(\rho,\bol{u},\bol{s})\mapsto(\rho,\bol{M},\boldsymbol{\Sigma})
\)
is invertible on the phase space. The inverse transformation $T^{-1}$ is given by 
\eq{
\bol{u}=\frac{\bol{M}}{m\rho}-\frac{q\bol{A}}{m},\qquad
\bol{s}=\frac{2}{c_g\hbar}\frac{\boldsymbol{\Sigma}}{\rho}.
} 
For any functional \(F[\rho,\bol{u},\bol{s}]\), define the corresponding functional in the new variables by
$\overline F[\rho,\bol{M},\bol{\Sigma}]=(F\circ T^{-1})[\rho,\bol{M},\bol{\Sigma}]$. 
A straightforward application of the chain rule yields
\sys{
\frac{\delta F}{\delta \bol{u}}
=&m\rho\,\frac{\delta \overline{F}}{\delta \bol{M}},\\[0.2em]
\frac{\delta F}{\delta \bol{s}}
=&\frac{c_g\hbar}{2}\rho\,\frac{\delta \overline{F}}{\delta \boldsymbol{\Sigma}},\\[0.2em]
\frac{\delta F}{\delta \rho}
=&\frac{\delta \overline{F}}{\delta \rho}
+\frac{\bol{M}}{\rho}\cdot\frac{\delta \overline{F}}{\delta \bol{M}}
+\frac{\boldsymbol{\Sigma}}{\rho}\cdot\frac{\delta \overline{F}}{\delta \boldsymbol{\Sigma}}.
}{chain}
Substituting \eqref{chain} into \eqref{PB}, 
dropping boundary integrals, 
and simplifying, one obtains
\eq{
\lrc{\overline{F},\overline{G}}
=&
-\int_{\Omega}
\bol{M}\cdot
\lrs{
\lr{\frac{\delta \overline{F}}{\delta \bol{M}}\cdot\nabla}\frac{\delta \overline{G}}{\delta \bol{M}}
-
\lr{\frac{\delta \overline{G}}{\delta \bol{M}}\cdot\nabla}\frac{\delta \overline{F}}{\delta \bol{M}}
}
\,dV\\
&-\int_{\Omega}
\rho\lrs{
\frac{\delta \overline{F}}{\delta \bol{M}}\cdot\nabla\frac{\delta \overline{G}}{\delta \rho}
-
\frac{\delta \overline{G}}{\delta \bol{M}}\cdot\nabla\frac{\delta \overline{F}}{\delta \rho}
}
\,dV\\
&-\int_{\Omega}
\Sigma_k\lrs{
\frac{\delta \overline{F}}{\delta \bol{M}}\cdot\nabla\frac{\delta \overline{G}}{\delta \Sigma_k}
-
\frac{\delta \overline{G}}{\delta \bol{M}}\cdot\nabla\frac{\delta \overline{F}}{\delta \Sigma_k}
}
\,dV\\
&+\int_{\Omega}
\boldsymbol{\Sigma}\cdot
\frac{\delta \overline{F}}{\delta \boldsymbol{\Sigma}}
\cp
\frac{\delta \overline{G}}{\delta \boldsymbol{\Sigma}}
\,dV.
\label{PBLP}
}
Here the first line is the standard Lie--Poisson bracket for compressible Euler flow written in momentum-density variables, the second and third lines are the semidirect-product terms describing the advection of the density \(\rho\) and of the spin density \(\boldsymbol{\Sigma}\), and the last line is the pointwise Lie--Poisson bracket on \(\mathfrak{so}(3)^*\).

Therefore \eqref{PBLP} is the standard Lie--Poisson bracket on the dual of the semidirect-product Lie algebra
\eq{
\mathfrak{g}
=
\mathfrak{X}(\Omega)\ltimes\lr{C^\infty(\Omega)\oplus C^\infty\!\lr{\Omega,\mathfrak{so}(3)^*}},
}
where \(\mathfrak{X}(\Omega)\) denotes the Lie algebra of smooth vector fields on \(\Omega\), endowed with the usual commutator bracket, \(C^\infty(\Omega)\) denotes the space of smooth scalar fields on \(\Omega\), and \(C^\infty\!\lr{\Omega,\mathfrak{so}(3)^*}\) denotes the space of smooth maps from \(\Omega\) to \(\mathfrak{so}(3)^*\). Here \(\mathfrak{X}(\Omega)\) acts on \(\rho\) and \(\boldsymbol{\Sigma}\) by Lie derivative, while \(C^\infty\!\lr{\Omega,\mathfrak{so}(3)^*}\) carries the pointwise coadjoint bracket. Hence \eqref{PBLP} satisfies bilinearity, antisymmetry, the Leibniz rule, and the Jacobi identity. Since \eqref{PB} is obtained from \eqref{PBLP} by the invertible change of variables \eqref{MSvars}, it follows that \eqref{PB} is also a Poisson bracket. 

It remains to verify that the Hamiltonian evolution generated by \eqref{PB} and \(\mathscr{H}\) reproduces the equations of motion.
Using the functional derivatives
\sys{
\frac{\delta\mathscr{H}}{\delta\rho}=&\frac{1}{2}m\bol{u}^2+q\Phi+V+h'+c_g^2Q+\frac{c_g^2\hbar^2}{8m}\abs{\nabla\bol{s}}^2-\frac{c_g\kappa_s q\hbar}{2m}\bol{B}\cdot\bol{s},\\
\frac{\delta\mathscr{H}}{\delta\bol{u}}=&m\rho\bol{u},\\
\frac{\delta\mathscr{H}}{\delta{s}_k}=&-\frac{c_g^2\hbar^2}{4m}\lr{\nabla\rho\cdot\nabla s^k+\rho\Delta s^k}-\frac{c_g\kappa_s q\hbar}{2m}\rho B^k,
}{varsproof}
and substituting them into
\eq{
\frac{\p\bol{z}}{\p t}=\lrc{\bol{z},\mathscr{H}},
}
one recovers, by direct computation, the continuity equation \eqref{eq:spinfluid-continuity}, the momentum equation \eqref{eq:spinfluid-momentum}, and the spin equation \eqref{eq:spinfluid-spin}. In the special case \(\kappa_s=c_g=1\), these reduce to \eqref{eq:hydro-continuity}, \eqref{eq:hydro-momentum}, and \eqref{eq:hydro-spin}. This proves the proposition.
\end{proof} 

\section{Many-particle system as a higher-dimensional fluid with spin}\label{sec:NPauli} 

The aim of the present section is to show that a system of $N$ particles obeying the Pauli equation \eqref{Pauli}, or equivalently, $N$ fluids with spin, each obeying the system of equations \eqref{eq:spinfluid-continuity}, \eqref{eq:spinfluid-momentum}, and \eqref{eq:spinfluid-spin}, and interacting via some binary interaction potential energy
\eq{
V_{ij}\lr{\bol{z}_i,\bol{z}_j,\bol{x}_i,\bol{x}_j},\qquad i,j=1,...,N,\label{Vij}
}
where $\bol{z}_i$ denotes the hydrodynamic state variables of species $i$, can be described as a single higher-dimensional fluid flow.  

We begin by defining the state variables of the $N$-fluid system. 
Each fluid occupies a smooth bounded domain $\Omega_i\subset\mathbb{R}^3$, $i=1,...,N$. 
Let $\rho_i\lr{\bol{x}_i,t}$, $\bol{u}_i\lr{\bol{x}_i,t}$, and $\bol{s}_i\lr{\bol{x}_i,t}$ denote the density, fluid velocity, and spin field of fluid $i$, respectively, with $i=1,...,N$ and $\bol{x}_i\in\Omega_i$. We set
\eq{
\bol{z}_i=\lr{\rho_i,\bol{u}_i,\bol{s}_i}.  
}
We denote by
\eq{
\bol{X}=\lr{\bol{x}_1,\dots,\bol{x}_N}\in W,\qquad d\bol{X}=\prod_{i=1}^NdV_i=dV_1\dots dV_N,
}
a point in the domain $W=\Omega_1\times\dots\times\Omega_N\subset\mathbb{R}^{3N}$ and the Euclidean volume element on $W$, respectively. 
The density of the $N$-fluid system is given by
\eq{
\varrho\lr{\bol{X},t}=\prod_{i=1}^N\rho_i\lr{\bol{x}_i,t}=\rho_1\lr{\bol{x}_1,t}\times \dots \times \rho_N\lr{\bol{x}_N,t}.\label{Nrho}
}
The density \eqref{Nrho} satisfies the normalization
\eq{
\int_W\varrho\,d\bol{X}=1. 
}
Let $\mc{X}\lr{W}$ denote the set of vector fields on $W$. 
The velocity field of the $N$-fluid system is defined as the $3N$-dimensional vector field $\bol{U}\lr{\bol{X},t}\in\mc{X}\lr{W}$ with expression 
\eq{
\bol{U}=\sum_{i=1}^N\bol{u}_i\lr{\bol{x}_i,t}=\sum_{i=1}^N\sum_{j=1}^3u_i^{j}\p_{j}^i, \label{NU}
}
where $\p_j^i$ denotes the $j$th tangent vector $\p/\p x_i^j$, $j=1,2,3$, on $\Omega_i$. 
The spin field $\bol{\mf{s}}\lr{\bol{X},t}\in \mc{X}\lr{W}$ of the $N$-fluid system is defined in a similar manner,
\eq{
\bol{\mf{s}}=\sum_{i=1}^N\bol{s}_i\lr{\bol{x}_i,t}=\sum_{i=1}^N\sum_{j=1}^3 s_i^j\p_j^i.\label{Ns}
}
Finally, we define the state variable
\eq{
\bol{Z}=\lr{\varrho,\bol{U},\bol{\mf{s}}}. 
}

Each fluid obeys the system of equations \eqref{eq:spinfluid-continuity}, \eqref{eq:spinfluid-momentum}, and \eqref{eq:spinfluid-spin}, augmented by the pairwise interaction \eqref{Vij}, which, to simplify the analysis, we take to be of the form of a spin-spin interaction 
\eq{
V_{ij}\lr{\bol{x}_i,\bol{x}_j,\bol{s}_i,\bol{s}_j},\qquad V_{ij}=V_{ji},\qquad i,j=1,...,N.
}
We note that the derivation would not change for the  general interaction \eqref{Vij}, but it would   lead to more complicated interaction terms in the governing equations.  
Introducing the mean interaction potential energy acting on a fluid parcel of fluid $i$,
\eq{
\Phi_i\lr{\bol{x}_i,t}=\sum_{j=1}^N\int_{\Omega_j}\rho_j V_{ij}\,dV_j, 
}
the governing equations for fluid $i$ are 
\sys{ 
\frac{\p\rho_i}{\p t}=&-\nabla_i\cdot\lr{\rho_i\bol{u}_i},\label{N-continuity}\\
m\rho_i\lr{\frac{\p}{\p t}+\bol{u}_i\cdot\nabla_i}\bol{u}_i=&
q\rho_i\lr{\bol{E}_i+\bol{u}_i\cp\bol{B}_i}-\rho_i\nabla_i\lr{V_i+\Phi_i+c_g^2Q_i}\\
&-\nabla_i P_i+\frac{c_g\kappa_sq\hbar}{2m}\rho_is_{ik}\nabla_i B_i^k-c_g^2\nabla_i\cdot \Pi_i,\\
\lr{\frac{\p}{\p t}+\bol{u}_i\cdot\nabla_i}\bol{s}_i=&\kappa_s\bol{s}_i\times\frac{q\bol{B}_i}{m}+\frac{c_g\hbar}{2m}\bol{s}_i\times
\lrs{\Delta_i\bol{s}_i+\lr{\nabla_i\log\rho_i\cdot\nabla_i}\bol{s}_i}-\frac{2}{c_g\hbar}\bol{s}_i\cp \frac{\p\Phi_i}{\p\bol{s}_i}.
}{ispinfluid}
Here a lower index $i$ denotes quantities, differential operators, and evaluations corresponding to the $i$th fluid. For example, $\bol{E}_i=\bol{E}\lr{\bol{x}_i,t}$ and $B_i^k$ denotes the $k$th component of $\bol{B}_i$.  

Notice also that the $N$-fluid Hamiltonian is given by 
\eq{
\mathscr{H}_{\rm tot}\lrs{\bol{Z}}=\sum_{i=1}^N\mathscr{H}_i+\frac{1}{2}\sum_{i,j=1}^N\int_{\Omega_i\times\Omega_j}\rho_i\rho_jV_{ij}\,dV_idV_j
=\sum_{i=1}^N\lr{\mathscr{H}_i+\frac{1}{2}\int_{\Omega_i}\rho_i\Phi_i\,dV_i},
}
where $\mathscr{H}_i$ is the Hamiltonian of fluid $i$, and that system \eqref{ispinfluid} follows by substituting $\mathscr{H}_{\rm tot}$ into the Poisson bracket \eqref{PB}.  

It is convenient to define the total force $\bol{F}_i$ and the total torque $\bol{\tau}_i$ acting on a parcel of the $i$th fluid by
\eq{
\bol{F}_i=q\lr{\bol{E}_i+\bol{u}_i\cp\bol{B}_i}-\nabla_i\lr{V_i+\Phi_i+c_g^2Q_i}
-\frac{1}{\rho_i}\nabla_i P_i+\frac{c_g\kappa_sq\hbar}{2m}s_{ik}\nabla_i B_i^k-\frac{c_g^2}{\rho_i}\nabla_i\cdot \Pi_i,
}
and 
\eq{
\bol{\tau}_i=\kappa_s\bol{s}_i\times\frac{q\bol{B}_i}{m}+\frac{c_g\hbar}{2m}\bol{s}_i\times
\lrs{\Delta_i\bol{s}_i+\lr{\nabla_i\log\rho_i\cdot\nabla_i}\bol{s}_i}-\frac{2}{c_g\hbar}\bol{s}_i\cp \frac{\p\Phi_i}{\p\bol{s}_i},
}
so that system \eqref{ispinfluid} can be written as
\sys{
\frac{\p\rho_i}{\p t}+\nabla_i\cdot\lr{\rho_i\bol{u}_i}=&0,\\
\lr{\frac{\p}{\p t}+\bol{u}_i\cdot\nabla_i}\bol{u}_i=&\frac{\bol{F}_i}{m},\label{mom}\\
\lr{\frac{\p}{\p t}+\bol{u}_i\cdot\nabla_i}\bol{s}_i=&\bol{\tau}_i.\label{spi}
}{ispinfluid2}
Let us introduce the $3N$-dimensional gradient operator 
\eq{
\nabla=\sum_{i=1}^N\nabla_i. 
}
We have the following. 
\begin{proposition}[Madelung plasma]\label{prop:MPlasma}
Under tangential boundary conditions $\bol{u}_i\cdot\bol{n}_i=0$ on $\p\Omega_i$, 
where $\bol{n}_i$ denotes the unit outward normal on $\p\Omega_i$, 
system \eqref{ispinfluid2} is equivalent to the $3N$-dimensional fluid with spin system 
\sys{
\frac{\p\varrho}{\p t}+\nabla\cdot\lr{\varrho\bol{U}}=&0\label{Ncon},\\
\lr{\frac{\p}{\p t}+\bol{U}\cdot\nabla}\bol{U}=&\frac{\bol{{F}}}{m}\label{Nmom},\\
\lr{\frac{\p}{\p t}+\bol{U}\cdot\nabla}\bol{\mf{s}}=&\bol{\tau}\label{Nspi},
}
{MPlasma} 
where
\eq{
\bol{{F}}=\sum_{i=1}^N\bol{F}_i,\qquad  \bol{\tau}=\sum_{i=1}^N\bol{\tau}_i.
}
We call such a system a Madelung plasma.  
\end{proposition}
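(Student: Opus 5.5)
The plan is to exploit the block structure of the $3N$-dimensional fields. The key fact is that the $i$th block of $\bol{U}$ and $\bol{\mf{s}}$ depends only on $(\bol{x}_i,t)$, while $\varrho$ is a product of single-fluid densities. I will prove the two directions of the equivalence separately. First, from \eqref{ispinfluid2} to \eqref{MPlasma}, by direct substitution. Second, from \eqref{MPlasma} back to \eqref{ispinfluid2}, by projecting onto the $i$th block and, for the continuity equation, integrating out the other coordinates. Throughout, $\varrho$, $\bol{U}$, $\bol{\mf{s}}$ are understood to have the structure \eqref{Nrho}, \eqref{NU}, \eqref{Ns}.

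For the momentum and spin equations, note that $\bol{U}\cdot\nabla=\sum_j\bol{u}_j\cdot\nabla_j$. Since $\bol{u}_i$ and $\bol{s}_i$ do not depend on $\bol{x}_j$ for $j\neq i$, we get
\[
(\bol{U}\cdot\nabla)\bol{U}=\sum_{i=1}^N(\bol{u}_i\cdot\nabla_i)\bol{u}_i,
\qquad
(\bol{U}\cdot\nabla)\bol{\mf{s}}=\sum_{i=1}^N(\bol{u}_i\cdot\nabla_i)\bol{s}_i .
\]
Each summand lies in the span of $\{\p^i_j\}_{j=1,2,3}$. I will also check that $\bol{F}_i$ and $\bol{\tau}_i$ are functions of $(\bol{x}_i,t)$ only, tangent to $\Omega_i$. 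This holds because $\Phi_i$ integrates out the other fluids, so it depends only on $\bol{x}_i$ (and on $\bol{s}_i$ through $V_{ij}$). Hence \eqref{Nmom} and \eqref{Nspi} decompose into the direct sum of the $N$ block equations. Since the tangent vectors $\p^i_j$ are linearly independent, the $3N$-dimensional equations hold if and only if every block equation \eqref{mom}, \eqref{spi} holds. This settles both directions for these two equations.

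For the continuity equation, the Leibniz rule and the fact that $\nabla_i\cdot$ only sees $\rho_i$ give
\[
\frac{\p\varrho}{\p t}+\nabla\cdot(\varrho\bol{U})
=\sum_{i=1}^N\Big(\prod_{j\neq i}\rho_j\Big)\Big[\frac{\p\rho_i}{\p t}+\nabla_i\cdot(\rho_i\bol{u}_i)\Big].
\]
So \eqref{ispinfluid2} implies \eqref{Ncon} immediately. The converse is the step I expect to be the main obstacle, because a vanishing sum does not force each bracket to vanish pointwise. I will handle it as follows. Fix $i$ and integrate the identity over $\prod_{j\neq i}\Omega_j$. For each $k\neq i$, the $k$th term yields
\[
\rho_i\prod_{j\neq i,k}\int_{\Omega_j}\rho_j\,dV_j\cdot\int_{\Omega_k}\Big[\frac{\p\rho_k}{\p t}+\nabla_k\cdot(\rho_k\bol{u}_k)\Big]dV_k .
\]
By the divergence theorem, the tangential condition $\bol{u}_k\cdot\bol{n}_k=0$ kills the flux term, and the remainder equals $\frac{d}{dt}\int_{\Omega_k}\rho_k\,dV_k$. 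This vanishes because each fluid density is normalized to unity for all times, as in section \ref{sec:PauliDer}. The $i$th term instead yields $\frac{\p\rho_i}{\p t}+\nabla_i\cdot(\rho_i\bol{u}_i)$ multiplied by $\prod_{j\neq i}\int\rho_j=1$. Hence each single-fluid continuity equation holds. This is precisely where the boundary conditions stated in Proposition \ref{prop:MPlasma} enter.
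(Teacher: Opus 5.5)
Your proposal is correct and follows the same route as the paper. The momentum and spin equations split block by block because $\bol{u}_i$, $\bol{s}_i$, $\bol{F}_i$ and $\bol{\tau}_i$ depend only on $(\bol{x}_i,t)$, and each single-fluid continuity equation is recovered by integrating the $3N$-dimensional continuity equation over $\prod_{j\neq i}\Omega_j$ with $\bol{u}_k\cdot\bol{n}_k=0$ — a step which, as you make explicit, also needs each $\int_{\Omega_k}\rho_k\,dV_k$ to remain equal to $1$ in time, something the paper's one-line proof uses only tacitly.
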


\begin{proof}
The equivalence of the momentum and spin field equations \eqref{Nmom} and \eqref{Nspi} with \eqref{mom} 
and \eqref{spi}
follows by expressing $\bol{U}$ and $\mathfrak{s}$ in terms of the original components $\bol{u}_i$ and $\bol{s}_i$. The $i$th continuity equation can be obtained by integrating \eqref{Ncon} on $\Omega_1\times...\times\Omega_{i-1}\times\Omega_{i+1}\times...\times\Omega_N$ and by applying boundary conditions. 
\end{proof}

\begin{remark}[A fluid quantum computer]\label{rem:quantum-computing} 
Due to the quantum-fluid correspondence linking the Pauli equation \eqref{Pauli} to the fluid with spin system \eqref{eq:spinfluid-continuity}, \eqref{eq:spinfluid-momentum}, and \eqref{eq:spinfluid-spin}, Proposition \ref{prop:MPlasma} implies that a system of $N$ Pauli particles can be encoded into the $3N$-dimensional fluid with spin system \eqref{MPlasma}. In principle, this suggests that real or virtual fluids can be used to perform quantum simulations, and conversely that quantum computers can be used to perform fluid simulations. 
More generally, 
the interference properties inherent to fluid systems may provide an alternative framework to implement quantum-type algorithms. 
\end{remark}


\section*{Statements and Declarations}
\subsection*{Data availability}
Data sharing not applicable to this article as no datasets were generated or analysed during the current study.


\subsection*{Competing interests} 
The authors have no competing interests to declare that are relevant to the content of this article.

\subsection*
{Disclaimer}
This is a preprint. It has not been peer-reviewed or submitted for publication. Please contact the author before using.

\printbibliography



 



\end{document}